\def\full{}
\title{Short-Lived Forward-Secure Delegation for TLS}
\author{Lukas Alber}
\email{lukas.alber@iaik.tugraz.at}
\affiliation{%
  \institution{Graz University of Technology}
  \city{Graz}
  \country{Austria}
}
\author{Stefan More}
\email{stefan.more@iaik.tugraz.at}
\affiliation{%
  \institution{Graz University of Technology}
  \city{Graz}
  \country{Austria}
}
\author{Sebastian Ramacher}
\email{sebastian.ramacher@ait.ac.at}
\affiliation{%
  \institution{AIT Austrian Institute of Technology}
  \city{Vienna}
  \country{Austria}
}
\keywords{identity-based signatures; delegated credentials}
\begin{document}
\ifdefined\full\else
\begin{CCSXML}
<ccs2012>
<concept>
<concept_id>10002978.10003014.10003016</concept_id>
<concept_desc>Security and privacy~Web protocol security</concept_desc>
<concept_significance>500</concept_significance>
</concept>
<concept>
<concept_id>10002978.10002979.10002980</concept_id>
<concept_desc>Security and privacy~Key management</concept_desc>
<concept_significance>300</concept_significance>
</concept>
</ccs2012>
\end{CCSXML}
\fancyhead{}
\fi

\ccsdesc[500]{Security and privacy~Web protocol security}
\ccsdesc[300]{Security and privacy~Key management}

\begin{abstract}
On today's Internet, combining the end-to-end security of TLS with Content Delivery Networks (CDNs) while ensuring the authenticity of connections results in a challenging delegation problem.
When CDN servers provide content, they have to authenticate themselves as the origin server to establish a valid end-to-end TLS connection with the client.
In standard TLS, the latter requires access to the secret key of the server.
To curb this problem, multiple workarounds exist to realize a delegation of the authentication.

In this paper, we present a solution that renders key sharing unnecessary and reduces the need for workarounds.
By adapting identity-based signatures to this setting, our solution offers short-lived delegations.
Additionally, by enabling forward-security, existing delegations remain valid even if the server's secret key leaks.
We provide an implementation of the scheme and discuss integration into a TLS stack.
In our evaluation, we show that an efficient implementation incurs less overhead than a typical network round trip.
Thereby, we propose an alternative approach to current delegation practices on the web.
\end{abstract}

\maketitle

\section{Introduction}
Transport Layer Security (TLS)~\cite{DBLP:journals/rfc/rfc8446} is the primary protocol for end-to-end secure communication between two parties on untrusted networks -- and most importantly -- on the Internet. It provides confidentiality and authenticity of the transmitted payloads. Authenticity of the endpoints and especially of the server is usually established via certificates managed by a Public Key Infrastructure (PKI) in the initial phase of the protocol, the handshake. In the process, at least the server presents its certificate to the other party for authentication.\footnote{Other methods such as pre-shared keys are not of importance for this work.}
Certificates themselves are digitally signed documents that bind a public key and validity information to an identity, e.g., to a hostname.
In the PKI deployed on the web, certificates are signed by trusted third-parties called Certificate Authorities (CAs) vouching for their validity.

To meet the demand for performance, scalability and security, Content Delivery Networks (CDNs) are now widely deployed.
CDNs provide content mirrored from the origin server via surrogate servers closer to the client.
In other words, they are systems of globally distributed servers that deliver content on behalf of webservers to users.
For popular applications, a CDN reduces the load on the origin servers, i.e., the webserver of the web application, and allows applications to scale up to some extent as they grow popular.
A survey performed by Cisco Systems from 2017 until 2018 estimates that the data volume of the global CDN traffic will reach 252 exabytes per month in 2022~\cite{cdnstat}.

 \begin{figure}[t]
    \centering
    \includegraphics[width=\columnwidth]{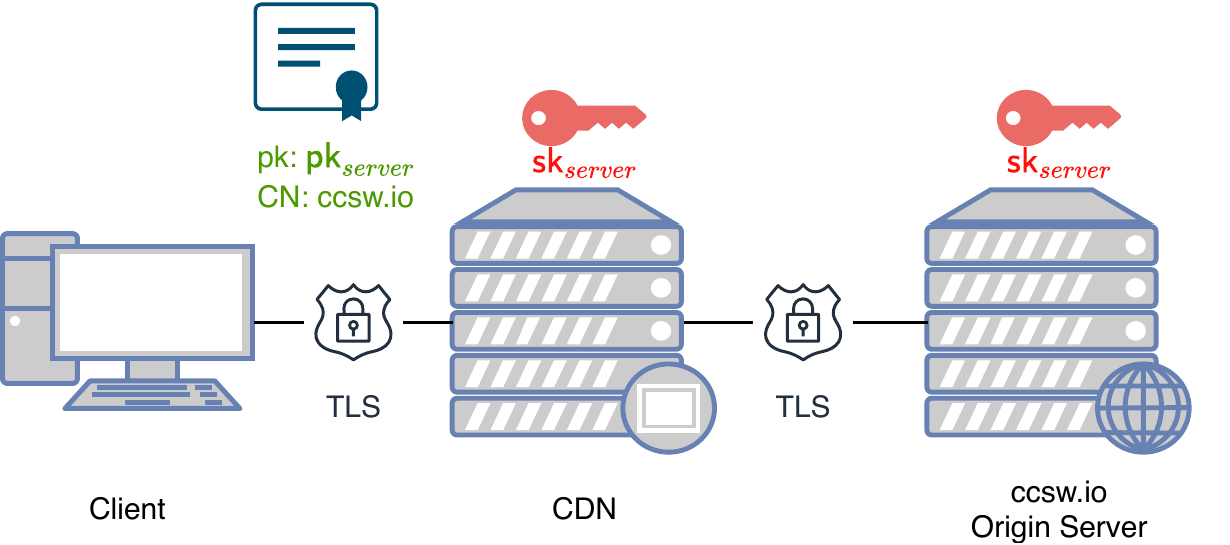}
    \caption{Example network architecture of an origin server using a Content Delivery Network (CDN). All communication over the Internet is secured using TLS. The server grants the right to act in its name to the CDN by sharing the secret key \skServer.}
    \label{fig:arch}
\end{figure}

Since security requirements do not change when using a CDN, CDNs need to serve an origin server's content over TLS.
Further, as the web client needs to verify the identity of the server using a certificate, the secret key counterpart of the public key is needed to sign the TLS handshake for authentication.
Thus, for a CDN to share content on behalf of an origin server,
the CDN needs access to the secret key of the certificate bound to the origin server's domain.
Since TLS -- by design -- currently lacks any form of delegation, the easiest solution is to simply hand over the secret key to the CDN.
However, sharing the secret key constitutes a security loophole as the owner of the domain loses control over the associated certificate to a CDN.
It makes more entities privy to the secret key of a certificate and consequently increases the possibility of key leakage.
Furthermore, an origin server cannot constrain a CDN to only respond to specific requests nor constrain a CDN from acting without permission on its behalf. The delegation can also not be revoked without revoking the origin server's certificate and rotating keys. Suffice to say, such practices increases security risks in the TLS ecosystem, as discussed in recent works~\cite{DBLP:conf/ccs/CangialosiCCLMM16,DBLP:conf/sp/LiangJDLWW14}.

The practice of secret key sharing as illustrated in \Cref{fig:arch} is also known as Custom Certificate~\cite{DBLP:conf/sp/LiangJDLWW14}. Together with Cruiseliner Certificates (or Shared Certificates)~\cite{DBLP:conf/ccs/CangialosiCCLMM16,DBLP:conf/sp/LiangJDLWW14} these practices are known as ``full delegation''~\cite{DBLP:conf/ccs/MamboUO96,DBLP:journals/joc/BoldyrevaPW12} in the literature. Cruiseliner Certificates allow CDNs to manage the users' certificates on their behalf. That is usually done with certificates valid for multiple domains using the Subject Alternative Name X.509 extension~\cite{DBLP:journals/rfc/rfc4985}. With these types of certificates, the domain owners also loose control over their certificates.
Additionally, listing a set of domains in the same certificate has implication on privacy, e.g., when it reveals internal hostnames.
Furthermore, it can enable customers, served using the same certificate, to impersonate each other.

To give an estimation on how widespread such practices are in practice, Cangialosi et al.~\cite{DBLP:conf/ccs/CangialosiCCLMM16} analyzed their use in 2016 and reported that 76.5\% of all organizations on the web share at least one private key with third-party organizations such as CDNs.
Some of these third-party organizations are responsible for certificate management, i.e., revocation and reissuing.
Furthermore, ten of such third-party organizations have the private keys of as much as 45.3\% of all the sites on the web.

\paragraph{Delegation in TLS}
Alternative approaches to avoid the security issues of full delegation have been proposed over the years (cf. \cite{chuat2019sok} for a recent survey of delegation techniques). We will discuss some approaches in the following.
Liang et al.~\cite{DBLP:conf/sp/LiangJDLWW14} carried out a systematic study on the interplay between HTTPS and CDNs in 2014.
On examination of 20 popular CDN providers and 10,721 of their customers, they revealed various problems regarding practices adopted by CDN providers, such as private key sharing, neglected revocation, and insecure back-end communication.
As a remedy, they proposed a lightweight extension for DANE~\cite{DBLP:journals/rfc/rfc6698} to tackle the delegation problem caused by the end-to-end nature of HTTPS and the man-in-the-middle nature of CDNs.
In short, the origin server adds both its certificate and the CDNs certificate as its TLSA records to DNS.
On receiving the CDN's certificate, the client recognizes the delegation if the CDN's certificate appears in the origin server's TLSA record.
However, the proposed solution by Liang et al. requires changes to certificate validation on the client.
Besides, the extra roundtrip to retrieve the TLSA records during a TLS handshake increases connection latency.

Cloudflare deployed Keyless SSL~\cite{cloudflare-keyless-ssl-details} to omit sharing of the private keys.
In short, Keyless SSL splits up the TLS handshake. It allows to establish a TLS session to the CDN, while the private key operations are outsourced to a key server (controlled by the domain owner).
Therefore, the domain owner can maintain control of its private key and avoids full delegation of the key to the CDN operator. Stebila et al.~\cite{DBLP:conf/trustcom/StebilaS15} examined the security and performance of the approach. They found that performance slightly worsens.
Bhargavan et al.~\cite{DBLP:conf/eurosp/BhargavanBFOR17} demonstrated several new attacks on Keyless SSL.
They also presented a security definition for authenticated and confidential channel establishment involving 3 parties to model the additional key server, dubbed 3(S)ACCE-security, based on 2-party ACCE security definitions that have been used in several proofs for TLS~\cite{DBLP:conf/crypto/KrawczykPW13,DBLP:conf/crypto/JagerKSS12}.
Furthermore, they proposed modifications to Keyless SSL for TLS 1.2 achieving 3(S)ACCE-security guarantees.
Also, they strongly argued for a new design for Keyless TLS 1.3 which is computational lighter and requires simpler assumptions than the proposed modifications of TLS 1.2.

Recently, Delegated Credentials (DeC)~\cite{DelegatedCredentials} have been proposed to replace Keyless SSL.
A DeC is a digitally signed data structure that contains a validity period, and a public key along with the signature algorithm.
By signing the data structure, the origin server delegates to the included public key.
Integration into the TLS protocol happens via a specific TLS extension.
DeC appear to be an attractive proposition and are currently evaluated by Mozilla~\cite{mozilla-dec}, Cloudflare~\cite{cloudflare-dec}, and Facebook~\cite{facebook-dec}.

Chuat et al.~\cite{DBLP:journals/corr/abs-1906-10775} investigated the potential of proxy certificates solving the long-standing problems of revocation and delegation in the web PKI.
In this case, holders of a non-CA certificate can issue proxy certificates.
Thereby, holders of a non-CA certificate are able to delegated signing privileges to other entities in a fine-grained manner.
In case of a key compromise, Chuat et al. claim that limiting the proxy certificate's lifetime to an arbitrary short timespan may curb problems of revocation.

Name Constraints is an extension to X.509 certificates~\cite{DBLP:journals/rfc/rfc5280} which defines the namespace for which all subsequent certificates below the certificate that defines the Name Constraints extension must reside.
Since Name Constraints can only be used in a CA certificate, a root CA can issue an intermediate CA certificate to the origin server that allows the origin server to issue child certificates limited to hostnames in its restricted namespace.
However, Liang et al.~\cite{DBLP:conf/sp/LiangJDLWW14} showed that name constraints have poor support in browsers, and when not properly checked, allow the origin server to create and use certificates that are not restricted to its namespace.

In a different work~\cite{chuat2019sok}, Chuat et al. proposed a 19-criteria framework for characterizing those revocation and delegation schemes.
Further, they propose that combining short-lived DeC or proxy certificates with functional revocation may curb several of these problems in web PKI.

Finally, we want to mention approaches such as STYX~\cite{DBLP:conf/cloud/WeiLLYG17} which require the presence of trusted execution environments such as Intel SGX. Due to wide range of attacks against such systems, e.g., \cite{DBLP:conf/uss/BulckMWGKPSWYS18,DBLP:conf/ccs/0001LMBS0G19,DBLP:conf/sp/BulckM0LMGYSGP20}, we do not consider them as a viable approach.

\paragraph{Delegation and Multi-Entity Communication in Other Protocols}
Various types of delegations are also interesting for other protocols besides TLS or variants of TLS for special use-cases. Kogan et al. developed \emph{Guardian Angel}, a delegation agent, to solve the problem of delegation in SSH~\cite{DBLP:conf/hotnets/KoganSTMW17}.
The system allows a user to choose which delegate systems can run commands on which servers.

Naylor et al.~\cite{DBLP:conf/sigcomm/NaylorSVLBLPRS15} developed \emph{mcTLS}, which extends TLS to support middleboxes.
\emph{mcTLS} allows TLS endpoints (i.e., clients and webservers) to introduce middleboxes in secure end-to-end connection while restricting what the middleboxes can read or write.
However, Bhargavan et al.~\cite{DBLP:conf/sp/BhargavanBDFO18} show that \emph{mcTLS} is insecure and susceptible to a class of attack called \emph{middlebox confusion} attacks.
They suggested a provable secure alternative to \emph{mcTLS} avoiding the middlebox confusion issue.
Cho et al.~\cite{DBLP:conf/iotdi/ChoPLCK19} proposed \emph{D2TLS}, a mutual authentication agent that helps cloud-based IoT devices set up secure connections by leveraging the session resumption in DTLS~\cite{DBLP:journals/rfc/rfc6347} while keeping the device's secret key secure.
\emph{D2TLS} requires a change to the IoT device, and it is only capable of mutual authentication.

Wagner et al.~\cite{DBLP:conf/openidentity/WagnerOM17} discuss the application of delegation schemes in trust management. They propose a scheme similar to DeC, which uses a XML structure to define generic constraints of the delegation.

\paragraph{Delegation in Cryptography}
While standard signature and public-key encryption schemes are not designed to allow delegation of any kind, the introduction of identity-based cryptography~\cite{DBLP:conf/crypto/Shamir84,DBLP:conf/asiacrypt/GentryS02,DBLP:conf/sacrypt/Hess02} started an exciting avenue that enables delegation of signing and encryption rights. While literature usually formulates identity-based systems without public keys, the public parameters are often interpreted as a public key. From this viewpoint, (hierarchical) identity-based systems enable delegation based on the identities and can be managed per public key. For encryption, such fine-grained control is, for example, achieved by attributed-based encryption~\cite{DBLP:conf/ccs/GoyalPSW06}, functional encryption~\cite{DBLP:journals/cacm/BonehSW12}, and fully puncturable encryption~\cite{DBLP:journals/iacr/DerlerRSS19} built on ideas found in hierarchical identity-based encryption schemes such as the one of Boneh, Boyen and Goh~\cite{DBLP:conf/eurocrypt/BonehBG05}.

Interestingly, there is a close connection between identity-based encryption and signatures. Naor described and Boneh and Franklin later sketched~\cite{DBLP:conf/crypto/BonehF01} a transformation that turns secure identity-based encryption scheme into an unforgeable signature scheme. This observation can be generalized to (hierarchical) identity-based signature schemes: any hierarchical identity-based encryption~\cite{DBLP:conf/asiacrypt/GentryS02} scheme with $\ell+1$ levels implies a hierarchical identity-based signature scheme with $\ell$ levels (cf. \cite{DBLP:series/ciss/KiltzN09}). Alternatively, identity-based signatures can also be obtained generically from two standard signature schemes~\cite{DBLP:conf/crypto/Shamir84,DBLP:conf/eurocrypt/BellareNN04}. This paradigm can also be extended to obtain forward-secure identity-based signatures and identity-based proxy signatures~\cite{DBLP:conf/asiacrypt/GalindoHK06}.

The latter, proxy signatures, were introduced by Mambo et al.~\cite{DBLP:conf/ccs/MamboUO96}. Their goal is also to allow the delegation of signing rights. There, the verifier checks that the signature was signed by the proxy and that signing rights were delegated to the proxy. Similar to identity-based signatures, they can be built from standard signature schemes~\cite{DBLP:journals/joc/BoldyrevaPW12}. While proxy signatures have a long history, including paring- and lattice-based constructions~\cite{DBLP:conf/incos/Li16,DBLP:conf/secrypt/BuccafurriSS16}, they have not seen much adaption in the context of TLS. As noted before, Chuat et al.~\cite{chuat2019sok} propose them as one approach to fix the delegation issue in TLS.

\subsection{Our Contribution}
The goal of our research is to address delegation between the origin server and the CDN. A good solution to the problem should fulfill the following requirements:
\begin{enumerate*}
  \item Private keys must be kept private to the origin server, and the origin server must retain control over the private key. Thereby, compromise of keys outside of the origin server's control is mitigated.
  \item The origin server should be able to delegate signing privileges in a fine-grained manner. That is, regardless of the possible choices of CDNs, the origin server is able to delegate keys on a per CDN basis, where one delegation does not prohibit further delegations.
  \item The origin server should be able to decide the validity period of the delegation. Even if the origin server prefers short validity periods, the overhead of the delegation should stay minimal.
  \item Furthermore, the origin server's private key should be equipped with forward-security features to reduce the overall risk. Thereby, even if the server's private key leaks, delegations that occurred before the leak remain valid.
  \item Finally, support for such a system should not require specific hardware such as trusted execution environments and Intel SGX to be present.
\end{enumerate*}

In this work, we provide a solution following the proposal of Chuat et al.~\cite{chuat2019sok} on a cryptographic level.
In contrast to the emerging paradigm of Delegated Credentials (DeC), our approach does not delegate to a different key pair, but instead derives constrained secret keys for the existing key pair.
To do so, we investigate variants of identity-based signatures which we equip with additional features.
Specifically, we extend them with an epoch that fixes identity-based delegations to a specific time-span, forming a scheme called \emph{time-bound identity-based signatures} (\TBIDS).
Thereby, the origin server delegates signing rights with respect to the origin server's public key to the CDN. However, the key is only valid for a certain time period.
If the origin server later decides that the CDN should no longer be able to establish TLS connections using the origin server's certificate, it suffices to no longer provide delegated keys of forthcoming epochs to the CDN.
It is not necessary for the origin server to entirely revoke keys and obtain fresh certificates. Short-lived delegations therefore represent an alternative approach to revocation~\cite{topalovic2012towards,nirsaagstar}.

Furthermore, we provide a forward-secure version of time-bound identity-based signatures.
Contrary to forward-secure identity-based signatures, delegated keys stay bound to a specific time period.
In our case, forward-security is provided for the master secret key kept by the origin server by applying a technique introduced by Canetti, Halevi and Katz~\cite{DBLP:conf/eurocrypt/CanettiHK03}.

To underline the practicability of our approach, we provide implementations and benchmarks of the proposed schemes.
Specifically, we show that while usage of the scheme would incur a small overhead, the runtime overhead of an optimized implementation is below typical network round trip times.
Therefore, the performance impact on TLS is small.
This observation is confirmed by an integration of the scheme into a TLS stack.
We thereby also show, that the changes to get the approach implemented in a TLS stack were limited to typical steps when adding a new signature scheme.

Finally, we note that our scheme is not limited to managing delegations with CDNs.
Without any changes, it can be applied to scenarios where website providers serve content from multiple servers themselves.
Instead of distributing the private key to every server, delegating keys to the web servers from a central key server helps to significantly reduce the risk of leaking the secret key.

\section{Preliminaries}
In this section, we briefly recall some notions of identity-based signature schemes.

\subsection{Hierarchical Identity-Based Signatures}

(Hierarchical) Identity-based signature schemes~\cite{DBLP:conf/asiacrypt/GentryS02} extend the standard notion of signature schemes \ifdefined\full(cf. \Cref{app:sigs})\fi\ with an additional key delegation algorithm $\Del$ to support the delegation of signing rights based on identities. $\Verify$ takes the identities as an additional argument, and verification only succeeds if the key used during signing was delegated for those particular identities. We briefly recall the formal definition and adapt them for the multi-instance setting.
\begin{definition}
  A hierarchical identity-based signature scheme ($\HIBS$) consists of PPT algorithms $(\Setup,\Gen,\Del,\Sign,\Verify)$ such that:
  \begin{description}
    \item[$\Setup(1^\secpar,\ell)\colon$] On input of security parameter $\secpar$ and hierarchy parameter $\ell$, outputs public parameters $\pp$.
    \item[$\Gen(\pp)\colon$] On input of public parameters $\pp$, outputs a master signing key $\sk_\varepsilon$ and a verification key $\pk$ with message space $\Msg$.
    \item[$\Del(\sk_{\id'},\id)\colon$] On input of secret key \(\sk_{\id'}\) and \(\id\in\Id^{\leq\ell}\), outputs a secret key $\sk_{\id}$ for $\id$ iff \(\id'\) is a prefix of \(\id\), otherwise \(\sk_{\id'}\).
    \item[$\Sign(\sk_\id, \msg)\colon$] On input of a secret key $\sk_\id$ and a message $\msg \in \Msg$, outputs a signature $\sigma$.
    \item[$\Verify(\pk,\id,\msg,\sigma)\colon$] On input of a public key $\pk$, an identity $\id \in \Id^{\leq\ell}$, a message $\msg \in \Msg$ and a signature $\sigma$, outputs a bit $b$.
  \end{description}
\end{definition}
Below, we present the standard existential unforgeability under adaptively chosen message attacks ($\EUFCMA$ security) notion. It extends the standard notion of $\EUFCMA$ security for signature schemes by allowing the adversary to query keys for identities as long as they are not prefixes of the target identity.
\begin{experiment}[t]
  \begin{flushleft}
  The experiment has access to the following oracles:
  \begin{description}
    \item[$\Del'(\sk_\varepsilon,\id)\colon$] Stores $\id$ in $\mathcal{Q}^\Id$ and returns $\Del(\sk_\varepsilon, \id)$.
    \item[$\Sign'(\sk,\id,\msg)\colon$] This oracle computes $\sk_\id \gets \Del(\sk_\varepsilon, \id)$, $\sigma \gets \Sign(\sk_\id, \msg)$, adds $\msg$ to $\mathcal{Q}$, and returns $\sigma$.
  \end{description}
\end{flushleft}
  \textbf{Experiment} $\ExpSigEUFCMA{\HIBS,\advA}(\secpar,\ell)$
  \begin{algorithmic}
    \State $\pp \gets \Setup(1^\secpar,\ell)$, $(\pk, \sk) \gets \Gen(\pp)$
    \State $(\msg^*, \id^*, \sigma^*) \gets \advA^{\Del'(\sk_\varepsilon, \cdot),\Sign'(\sk_\varepsilon, \cdot, \cdot)}(\pk)$
    \State if $\Verify(\pk, \id^*,  \msg^*, \sigma^*) = 0$, return 0
    \State if $\msg^* \in \mathcal{Q}$, return 0
    \State if $\exists \id \in \mathcal{Q}^\Id$ such that $\id$ is a prefix of $\id^*$, return 0
    \State return 1
  \end{algorithmic}
  \caption{The $\EUFCMA$ experiment for a hierarchical identity-base signature scheme $\HIBS$.}
  \label{exp:hibs-eufcma}
\end{experiment}
\begin{definition}[$\EUFCMA$]
  For any PPT adversary $\advA$, we define the advantage in the $\EUFCMA$ experiment $\ExpSigEUFCMA{\HIBS,\advA}$ (cf. \Cref{exp:hibs-eufcma}) as
  \begin{align*}
    \AdvSigEUFCMA{\HIBS,\advA}(\secpar):=\Pr\left[\ExpSigEUFCMA{\HIBS,A}(\secpar) = 1\right] \text{.}
  \end{align*}
  A signature scheme $\Sigma$ is $\EUFCMA$-secure, if $\AdvSigEUFCMA{\Sigma,\advA}(\secpar)$ is a negligible function in $\secpar$ for all PPT adversaries $\advA$.
\end{definition}

\subsection{Naor Transform: Signatures from Identity-Based Encryption}

Based on transform first proposed by Naor~\cite{DBLP:conf/crypto/BonehF01}, hierarchical identity-based signature can be obtained from hierarchical identity-based encryption (\HIBE)~\cite{DBLP:conf/asiacrypt/GentryS02}, a generalization of identity-based encryption (\IBE)~\cite{DBLP:conf/crypto/BonehF01}. These encryption schemes enable delegation of decryption rights based on identities in a hierarchical fashion. Identities at some level can delegate secret keys to its descendant entities, but cannot decrypt ciphertexts intended for other (hierarchical) identities. Before we discuss the Naor transform, we recall a definition of {\HIBE}s supporting multiple instances (cf. \cite{DBLP:journals/iacr/DerlerRSS19}) below.
\begin{definition}[\HIBE]
  An hierarchical identity-based encryption (\HIBE) scheme with message space \(\Msg\) and identity space \(\Id^{\leq\ell}\), for some \(\ell\in\N\), consists of the PPT algorithms $(\Setup,\Gen,\Del,\Enc,\Dec)$:
\begin{description}
  \item[$\Setup(1^\secpar,\ell)\colon$] On input of security parameter $\secpar$ and hierarchy parameter $\ell$, outputs public parameters $\pp$.
  \item[$\Gen(\pp)\colon$] On input of public parameters $\pp$, outputs a keypair $(\pk,\sk_{\varepsilon})$.
  \item[$\Del(\sk_{\id'},\id)\colon$] On input secret key \(\sk_{\id'}\) and \(\id\in\Id^{\leq\ell}\), outputs a secret key $\sk_{\id}$ for $\id$ iff \(\id'\) is a prefix of \id, otherwise \(\sk_{\id'}\).
  \item[$\Enc(\pk,\id,\msg)\colon$] On input of a public key \pk, a message $\msg\in\Msg$ and an identity $\id\in\Id^{\leq\ell}$, outputs a ciphertext \(\ctxt_\id\) for \id.
  \item[$\Dec(\sk_{\id'},\ctxt_{\id})\colon$] On input of a secret key \(\sk_{\id'}\) and a ciphertext \(\ctxt_\id\), outputs $\msg\in\Msg\cup\{\bot\}$.
\end{description}
\end{definition}
For correctness, we require that all secret keys that are delegated for the identity (or a prefix) associated to a ciphertext, are able to decrypt it. More formally, we require for all \(\secpar,\ell\in\N\), all $\pp \gets \Setup(1^\secpar, \ell)$, all \((\pk,\sk_\varepsilon)\gets\Gen(\pp)\), all \(\msg\in\Msg\), all \(\id,\id'\in\Id^{\leq\ell}\) where \(\id'\) is a prefix of \id, all \(\sk_\id\gets\Del(\sk_{\id'},\id)\), all \(\ctxt_\id\gets\Enc(\pk,\id,\msg)\), \(\Dec(\sk_\id,\ctxt_\id)=\msg\) holds.

\begin{experiment}[t]
  \begin{flushleft}
  The experiment has access to the following oracles:
  \begin{description}
    \item[$\Del^1(\sk_\varepsilon,\id)\colon$] Stores $\id$ in $\mathcal{Q}$ and returns $\Del(\sk_\varepsilon, \id)$.
    \item[$\Del^2(\sk_\varepsilon,\id)\colon$] This oracle checks whether $\id$ is a prefix of $\id^*$ and returns $\bot$ if so. Otherwise it returns $\Del(\sk_\varepsilon, \id)$.
  \end{description}
\end{flushleft}
  \textbf{Experiment} $\ExpHIBEindcpa{\HIBE,\advA}(\secpar,\ell)$
  \begin{algorithmic}
    \State $\pp \gets\Setup(1^\secpar,\ell)$, $(\pk,\sk_\varepsilon) \gets \Gen(\pp)$, $b\getsr\{0,1\}$
    \State $(\id^*, \msg_0, \msg_1, \st) \gets \advA^{\Del^1(\sk_\varepsilon,\cdot)}(\pk)$
    \State if $\exists \id \in \mathcal{Q}$ such that $\id$ is a prefix of $\id^*$, return 0
    \State if $\msg_0, \msg_1 \notin \mathcal{M}$ or $|M_0|\neq |M_1|$, return 0
    \State $b^* \gets \advA^{\Del^2(\sk_\varepsilon,\cdot)}(\st,\Enc(\pk, \id_*, \msg_b))$
    \State if $b = b^*$ return then $1$, else return $0$
  \end{algorithmic}
  \caption{The $\HIBEINDCPA$ experiment for a $\HIBE$ scheme.}
  \label{exp:hibe-ind-cpa}
\end{experiment}

Next we recall the standard security notion of indistinguishability under chosen ciphertext attacks. Here an adversary chooses a target identity and two messages and may query keys for identities as long as the queries identities are not a prefix of the target identity. Then, given the encryption of one of the two messages under the target identity, the adversary needs to determine which message was encrypted. The adversary should not be able to win this experiment better than guessing.
\begin{definition}
  For any PPT adversary \advA, we define the advantage in the $\HIBEINDCPA$ experiment $\ExpHIBEindcpa{\HIBE,\advA}$ (cf. \Cref{exp:hibe-ind-cpa}) as
  \begin{align*}
    \AdvHIBEindcpa{\HIBE,\advA}(\secpar,\ell):=\left|\Pr\left[\ExpHIBEindcpa{\HIBE,A}(\secpar,\ell) = 1\right] - \frac{1}{2}\right| \text{,}
  \end{align*}
  for an integer $\ell\in\N$. A $\HIBE$ is $\HIBEINDCPA$-secure, if $\AdvHIBEindcpa{\HIBE,\advA}\allowbreak (\secpar,\ell)$ is a negligible function in $\secpar$ for all PPT adversaries $\adv$.
\end{definition}

Subsequently, we provide a concrete HIBE construction on bilinear groups and, in particular, present a variant of the Boneh-Boyen-Goh (BBG) HIBE~\cite{DBLP:conf/eurocrypt/BonehBG05} with an explicit $\Setup$ algorithm to generate shared parameters in \Cref{sm:bbg_cca}. We choose to instantiate our approach using asymmetric bilinear groups in the type 3 setting as they represent the state-of-the-art regarding efficiency and similarity of the security levels of the base and target groups. Let $\BilGen$ be an algorithm that, on input a security parameter $1^\secpar$, outputs $\mathsf{BG}=(p, e, \G_1, \G_2, \G_T, g, \hat g) \gets \BilGen(1^\secpar)$, where $\G_1$, $\G_2$, $\G_T$ are groups of prime order $p$ with bilinear map $e\colon \G_1 \times \G_2 \to \G_T$ and generators $g, \hat{g}$ of $\G_1$ and $\G_2$, respectively.
\begin{scheme}[t]
\begin{description}
  \item[$\underline{\Setup(1^\secpar,\ell)}\colon$] Generate a bilinear group $(p, e, \G_1, \G_2, \G_T, g, \hat g) \gets \BilGen(1^\secpar)$ and $g_2,g_3,h_1,\ldots,h_{\ell} \getsr \G_1$, fix a hash function $H\colon\{0,1\}^*\rightarrow \Z_p^*$ and return $\pp\gets (H, p, e, \G_1, \G_2, \G_T, g, \hat g, g_2,g_3,h_1,\ldots,h_{\ell})$.
  \item[$\underline{\Gen({\pp})}\colon$] Choose $\alpha\getsr \Z_p$ and return $(\pk, \sk)\gets (\hat g^\alpha, g_2^\alpha)$.
  \item[$\underline{\Del({\sk_{\id},\id})}\colon$] Parse $\id$ as $(I_1,\ldots,I_k)$ with $k\leq\ell$,
    \begin{compactitem}
      \item[-] If $\sk_{\id}$ is the master secret $g_2^\alpha$, sample $v\getsr \Z_p$ and return $(g_2^\alpha\cdot (h_1^{H(I_1)} \cdots h_k^{H(I_k)} \cdot g_3)^v, \hat g^v, h_{k+1}^v,\ldots,h_{\ell}^v)$,
      \item[-] else assume that $\sk_{\id}$ is $(g_2^\alpha\cdot (h_1^{H(I_1)} \cdots h_{k-1}^{H(I_{k-1})} \cdot g_3)^{v'}, \hat g^{v'}, h_{k}^{v'},\ldots,h_{\ell}^{v'})=(a_0,a_1,b_{k},\ldots,b_{\ell})$, sample $w\getsr \Z_p$ and output $(a_0\cdot b_k^{H(I_{k})} \cdot (h_1^{H(I_1)} \cdots h_{k}^{H(I_{k})} \cdot g_3)^{w},a_1\cdot \hat g^w, b_{k+1}\cdot h_{k+1}^w,\ldots,b_{\ell}\cdot h_{\ell}^w)$.
    \end{compactitem}
  \item[$\underline{\Enc(\pk,\id, \msg)}\colon$] Parse $\id$ as $(I_1,\ldots,I_k) \in (\Z_p^*)^{k}$ with $k \leq \ell$, sample $s\getsr \Z_p$, and return $(C_1,C_2,C_3)\gets (e(g_2,\pk)^s\cdot \msg, \hat g^s, (h_1^{H(I_1)}\cdot \ldots \cdot h_{k}^{H(I_k)} \cdot g_3)^s)$.
  \item[$\underline{\Dec(\sk_{\id},\ctxt_{\id})}\colon$] Consider $\id=(I_1,\ldots,I_k)$ with $k\leq\ell$, parse $\sk_{\id}$ as $(a_0,a_1,b_{k+1},\ldots,b_\ell)$, $\ctxt_{\id}$ as $(C_1,C_2,C_3)$. Return $\msg \gets C_1 \cdot e(C_3, a_1)\cdot e(a_0,C_2)^{-1}$.
\end{description}
  \caption{$\HIBEINDCPA$-secure version of the BBG \HIBE.}
\label{sm:bbg_cca}
\end{scheme}

We are now ready to recall the IBE-to-signature transformation from Naor~\cite{DBLP:conf/crypto/BonehF01}: a signature on a message $\msg$ under $\sk_{\id}$ is just a $\HIBE$ secret key delegated for $\id' = (\id \| \msg)$ which is verified by encrypting a random message for $\id'$ and then trying to decrypt it using the key in the signature. More generally, it converts any level $\ell+1$ $\HIBE$ to a level $\ell$ $\HIBS$ (cf. \cite{DBLP:series/ciss/KiltzN09}) as depicted in \Cref{sm:naor}. In the special case with $\ell=0$, we obtain a signature scheme from an $\IBE$. Prominent signatures schemes that are based on this technique includes BLS~\cite{DBLP:conf/asiacrypt/BonehLS01}.

\begin{scheme}[ht]
  \begin{description}
    \item[$\underline{\Setup(1^\secpar, \ell)}\colon$] Let $\pp \gets \HIBE.\Setup(1^\secpar,\ell+1)$ and return $\pp$.
    \item[$\underline{\Gen({\pp})}\colon$] Run $(\pk, \sk) \gets \HIBE.\Gen(\pp)$ and return $(\pk, \sk)$.
    \item[$\underline{\Del(\sk_{\id'}, \id)}\colon$] Return $\sk_\id \gets \HIBE.\Del(\sk_{\id'}, \id)$.
    \item[$\underline{\Sign(\sk_\id, \msg)}\colon$] Parse $\id$ as $\id_1, \ldots, \id_l$ and return $\sigma \gets \HIBE.\Del(\sk, (\id_1, \ldots, \id_\ell, \msg))$.
    \item[$\underline{\Verify(\pk, (\id_i)_{i=1}^\ell, \msg,\sigma)}\colon$] Choose $\msg' \getsr \HIBE.\Msg$ and compute $\ctxt \gets \HIBE.\Enc(\pk, (\id_1, \ldots, \id_\ell, \msg), \msg')$. Return $1$ if $\HIBE.\Dec(\sigma, \ctxt) = \msg'$, otherwise return $0$.
  \end{description}
  \caption{(Hierarchical identity-based) Signature scheme obtained by applying Naor-transform to $\HIBE$.}
  \label{sm:naor}
\end{scheme}

\begin{theorem}[\cite{DBLP:conf/provsec/CuiFHIZ07}]
  If $\HIBE$ is $\HIBEINDCPA$ secure and has a message space that is exponentially large in the security parameter, then \Cref{sm:naor} is $\EUFCMA$ secure.
\end{theorem}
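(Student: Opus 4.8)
The plan is to prove the theorem by a direct reduction to $\HIBEINDCPA$ security of the underlying level-$(\ell+1)$ $\HIBE$. Given a PPT adversary $\advA$ that wins the $\EUFCMA$ experiment of \Cref{exp:hibs-eufcma} against the scheme of \Cref{sm:naor}, I would build a PPT adversary $\mathcal{B}$ against $\ExpHIBEindcpa{\HIBE,\mathcal{B}}(\secpar,\ell+1)$. On input $\pp$ and $\pk$, $\mathcal{B}$ hands $\pk$ to $\advA$ and simulates its oracles using only the first-phase delegation oracle $\Del^1$: a query $\id$ to $\Del'$ is answered by returning $\Del^1(\id)$, and a signing query $(\id,\msg)$ to $\Sign'$ is answered by returning $\Del^1(\id\|\msg)$, which by definition of \Cref{sm:naor} is exactly a $\HIBE$ key for $\id\|\msg$, i.e.\ a signature on $\msg$ under $\id$ (one checks that delegating directly from the master key yields the same distribution as the two-step delegation performed by the real $\Sign'$ oracle, which holds for the re-randomising delegation of \Cref{sm:bbg_cca}).

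When $\advA$ halts with a forgery $(\msg^*,\id^*,\sigma^*)$, $\mathcal{B}$ ends the first phase and outputs as its challenge the identity $\hat\id \gets \id^*\|\msg^*$ (of length at most $\ell+1$) together with two independent, uniformly random, distinct plaintexts $\msg_0,\msg_1 \getsr \HIBE.\Msg$. Upon receiving the challenge ciphertext $\ctxt^* \gets \HIBE.\Enc(\pk,\hat\id,\msg_b)$, it makes no second-phase queries, computes $\msg'' \gets \HIBE.\Dec(\sigma^*,\ctxt^*)$, and outputs $0$ if $\msg''=\msg_0$, $1$ if $\msg''=\msg_1$, and a uniformly random bit otherwise.

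Two things then have to be checked. First, $\mathcal{B}$ must be a legal $\HIBEINDCPA$ adversary, i.e.\ none of the identities it fed to $\Del^1$ may be a prefix of $\hat\id=\id^*\|\msg^*$. For a delegation query $\id$ this would force $\id$ to be a prefix of $\id^*$, excluded by the winning condition of \Cref{exp:hibs-eufcma}; for a signing query $(\id,\msg)$ the submitted identity is $\id\|\msg$, and since $\advA$ wins only if $\msg^*\notin\mathcal{Q}$ we have $\msg\neq\msg^*$, so $\id\|\msg=\id^*\|\msg^*$ cannot occur, while $\id\|\msg$ being a proper prefix of $\id^*$ is ruled out by treating message- and identity-components as living in disjoint (e.g.\ domain-separated) spaces — a point the statement implicitly relies on, since otherwise a signing query on a prefix identity would leak a key covering $\id^*$. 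Second, I would lower-bound $\mathcal{B}$'s advantage: $\msg_b$ is distributed exactly as the plaintext that $\Verify$ in \Cref{sm:naor} samples internally, so an accepting forgery makes $\HIBE.\Dec(\sigma^*,\ctxt^*)=\msg_b$ with the same probability with which $\Verify$ accepts, whence $\Pr[\msg''=\msg_b]\ge \AdvSigEUFCMA{\HIBS,\advA}(\secpar)$ (after unconditioning); on the other hand $\msg_{1-b}$ is independent of everything $\mathcal{B}$ observes, so $\Pr[\msg''=\msg_{1-b}]\le 1/(|\HIBE.\Msg|-1)$. A short computation gives $\AdvHIBEindcpa{\HIBE,\mathcal{B}}(\secpar,\ell+1)\ge \frac12\big(\AdvSigEUFCMA{\HIBS,\advA}(\secpar)-1/(|\HIBE.\Msg|-1)\big)$.

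I expect the main obstacle to be precisely this last estimate, and it is where the exponentially large message space is indispensable: $\sigma^*$ is only required to pass the \emph{probabilistic} verification of \Cref{sm:naor}, not to be an honestly generated $\HIBE$ decryption key, so one cannot simply claim it decrypts the challenge with probability $1$. The largeness hypothesis is exactly what converts "passes $\Verify$ with non-negligible probability'' into "decrypts a freshly encrypted random plaintext back to itself with non-negligible probability,'' and simultaneously makes the probability of hitting the decoy $\msg_{1-b}$ negligible, so that $\mathcal{B}$ inherits $\advA$'s advantage up to a negligible additive loss; since $\AdvHIBEindcpa{\HIBE,\mathcal{B}}$ is negligible by assumption, $\AdvSigEUFCMA{\HIBS,\advA}$ is negligible as well. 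The remaining work is the routine bookkeeping of the winning conditions of \Cref{exp:hibs-eufcma} sketched above.
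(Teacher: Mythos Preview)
The paper does not actually prove this theorem: it is stated with a citation to \cite{DBLP:conf/provsec/CuiFHIZ07} and no proof is given in the paper itself. Your reduction is the standard one underlying that cited result, and the outline is essentially correct, including the key observation that the exponentially large message space is what makes the probabilistic $\Verify$ check translate into a non-negligible decryption success probability while keeping the chance of hitting the decoy plaintext negligible.

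One point worth tightening is your treatment of the prefix condition for $\Sign'$ queries. You appeal to ``domain separation'' between identity and message components, but in \Cref{sm:naor} the $\Verify$ algorithm takes identities of length exactly $\ell$, so the honest scheme is really a level-$\ell$ $\HIBS$ with full-length identities only; signatures then always sit at level $\ell+1$, and a $\Sign'$-induced identity $\id\|\msg$ can coincide with $\id^*\|\msg^*$ only if $\id=\id^*$ and $\msg=\msg^*$, never be a proper prefix. Making this explicit (or, alternatively, spelling out a concrete domain-separation encoding if variable-length identities are intended) removes the need for the informal ``implicitly relies on'' caveat. With that adjustment your argument matches the standard proof.
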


\section{$\TBIDS$: Time-bound Identity-Based Signatures}
\label{sec:crypto}

In this section, we introduce a special-case of hierarchical identity-based signatures: time-bound identity-based signatures. The scheme keeps the identity-based delegation mechanism provided by identity-based signatures, but at the same time it is modified to support an epoch besides an identity. The epoch is designed to be used in a way that the identity-based delegations are bound to a time period. Thereby, even if one has received a delegated key, this key is not useful before or after the corresponding period. Additionally, we introduce the notion of forward-security for the master secret key that is used for all delegations at almost no additional cost.

\subsection{Syntax and Definitions}
\label{sec:cryptoDef}

First we start with the syntax of time-bound identity-based signatures. Therefore, we extend the definition of identity-based signatures by adding an epoch to the sign and verification algorithms as well as the delegation algorithm.
\begin{definition}[Time-bound Identity-Based Signatures] \label{def:tbids}
  A time-bound identity-based signatures scheme $\TBIDS$ with identity-space $\Id$ consists of the PPT algorithms $(\Gen, \Sign, \Verify, \Del)$, which are defined as follows:
  \begin{description}
    \item[$\Gen(1^\secpar, n)\colon$] On input of a security parameter $\secpar$ and maximal number of epochs $n$, outputs a signing key $\sk$ and a verification key $\pk$ with associated message space $\Msg$.\footnote{We allow $n = \infty$ to denote an unbounded number of epochs.}
    \item[$\Sign(\sk_{i,\id}, \msg)\colon$] On input of a secret key $\sk_{i,\id}$ for an identity $\id \in \Id$ and an epoch $i \in [n]$ and a message $\msg \in \Msg$, outputs a signature $\sigma$.
    \item[$\Verify(\pk,i,\id, \msg,\sigma)\colon$] On input of a public key $\pk$, an identity $\id \in \Id$, an epoch $i \in [n]$, a message $\msg \in \Msg$ and a signature $\sigma$, outputs a bit $b \in \{0,1\}$.
    \item[$\Del(\sk, i,\id)\colon$] On input of a secret key $\sk$, an identity $\id \in \Id$ and an epoch $i \in [n]$, outputs a secret key $\sk_{i,\id}$.
  \end{description}
\end{definition}

Such a scheme is considered correct if for all security parameters $\secpar \in \N$ and $n = n(\secpar) \in \N$, for all $(\sk, \pk) \gets \Gen(1^\secpar, n)$, for all $\id \in \Id$, for all $i \in [n]$, for all extracted keys $\sk_{i,\id} \gets \Del(\sk, i,\id)$, for all $\msg \in \Msg$, we have that
\[
  \Pr\mleft[\Verify(\pk, i,\id, \msg, \Sign(\sk_{i,\id}, \msg)) = 1\mright] = 1\text{.}
\]

For unforgeability, the standard security notion of $\EUFCMA$ security is extended to cover the epoch for the delegation of keys. The goal of the adapted notion is to have the adversary select a target epoch and identity. The scheme is then considered unforgeable as long as the adversary is unable to forge a signature for the selected target epoch and identity -- even if the adversary has access to keys for the target identity for other epochs or keys for other identities.
\begin{experiment}[ht]
  \begin{flushleft}
  The experiment has access to the following oracles:
  \begin{description}
    \item[$\Sign'(\sk,i,\id, \msg)\colon$] This oracle computes $\sk_{i,\id} \gets \Del(\sk, i, \id)$, $\sigma \gets \Sign(\sk_{\id,i}, \msg)$, adds $\msg$ to $\mathcal{Q}$, and returns $\sigma$.
    \item[$\Del'(\sk, i, \id)\colon$] Stores $(i, \id)$ in $\mathcal{Q}^\Id$ and returns $\Del(\sk, i, \id)$.
  \end{description}
\end{flushleft}
  \textbf{Experiment} $\ExpSigEUFCMA{\TBIDS,\advA}(\secpar, n)$
  \begin{algorithmic}
    \State $\pp \gets \Setup(1^\secpar,n)$, $(\pk, \sk) \gets \Gen(\pp)$
    \State $(\msg^*, i^*, \id^*, \sigma^*) \gets \advA^{\Del'(\sk_, \cdot),\Sign'(\sk, \cdot, \cdot, \cdot)}(\pk)$
    \State if $\Verify(\pk, i^*, \id^*,  \msg^*, \sigma^*) = 0$, return $0$
    \State if $\msg^* \in \mathcal{Q}$, return $0$
    \State if $(i^*, \id^*) \in \mathcal{Q}^\Id$, return $0$
    \State return $1$
  \end{algorithmic}
  \caption{The $\EUFCMA$ experiment for a time-bound identity-based signature scheme $\TBIDS$.}
  \label{exp:tbids-eufcma}
\end{experiment}
\begin{definition}[$\EUFCMA$]
  For any PPT adversary $\advA$, we define the advantage in the $\EUFCMA$ experiment $\ExpSigEUFCMA{\TBIDS,\advA}$ (cf. \Cref{exp:tbids-eufcma}) as
  \begin{align*}
    \AdvSigEUFCMA{\TBIDS,\advA}(\secpar,n):=\Pr\left[\ExpSigEUFCMA{\TBIDS,\advA}(\secpar,n) = 1\right] \text{,}
  \end{align*}
  for an integer $n\in\N$. A time-bound identity-based signature scheme $\TBIDS$ is $\EUFCMA$-secure, if $\AdvSigEUFCMA{\TBIDS,\advA}(\secpar,n)$ is a negligible function in $\secpar$ for all PPT adversaries $\advA$.
\end{definition}

\begin{remark}
We note that while the syntax is very similar to forward-secure identity-based signatures~\cite{DBLP:conf/asiacrypt/GalindoHK06}, the goals are different. In such a signature scheme, the forward-security of keys is considered per identity, i.e., each user gets a master secret key for their identity. Hence, users are able to update their own keys to the next epoch. For $\TBIDS$, the goal is to prevent any updates, and thus updates of delegatable keys are not possible.
\end{remark}
Henceforth, we also introduce a forward-secure version of time-bound identity-based signatures:
\begin{definition} %
  A forward-secure time-bound identity-based signatures scheme $\fsTBIDS$ extends \Cref{def:tbids} with a PPT algorithm $\Update$ such that:
  \begin{description}
    \item[$\Update(\sk_{i-1})\colon$] On input of a secret key $\sk_i$ for epoch $i-1$, outputs a secret key $\sk_i$ for epoch $i$.
    \item[$\Del(\sk_i,\id)\colon$] On input of a secret key $\sk_i$ for epoch $i$ and an identity $\id \in \Id$, outputs a secret key $\sk_{i,\id}$.
\end{description}
\end{definition}
Correctness is defined as before, but the derived keys are obtained via $\sk_i \gets \Update(\sk_{i-1})$ and then $\sk_{i,\id} \gets \Del(\sk_i, \id)$. Subsequently, we present unforgeability for $\fsTBIDS$. In this case, the adversary may also query the secret key for an epoch by querying the $\Update'$ oracle. The adversary then has to forge a signature for an epoch before any epoch queried via $\Update'$ and is also not allowed to query delegated keys for the target epoch and identity.
\begin{experiment}[ht]
  \begin{flushleft}
  The experiment has access to the following oracles:
  \begin{description}
    \item[$\Sign'(\sk_0, i, \id, \msg)\colon$] This oracle computes $\sk_{i,\id} \gets \Del(\sk_0, i, \id)$, $\sigma \gets \Sign(\sk_{\id,i}, \msg)$, adds $\msg$ to $\mathcal{Q}$, and returns $\sigma$.
    \item[$\Del'(\sk_0, i, \id)\colon$] Stores $(i, \id)$ in $\mathcal{Q}^\Id$ and returns $\Del(\sk_0, i, \id)$.
    \item[$\Update'(\sk_0, i)\colon$] Stores $i$ in $\mathcal{Q}^t$, runs $\sk_{j} \gets \Update(\sk_{j-1})$ for $j \in [i]$, and returns $\sk_i$
  \end{description}
\end{flushleft}
  \textbf{Experiment} $\ExpSigfsEUFCMA{\fsTBIDS,\advA}(\secpar, n)$
  \begin{algorithmic}
    \State $\pp \gets \Setup(1^\secpar,n)$, $(\pk, \sk) \gets \Gen(\pp)$
    \State $(\msg^*, i^*, \id^*, \sigma^*) \gets \advA^{\Del'(\sk_, \cdot),\Sign'(\sk, \cdot, \cdot, \cdot),\Update'(\sk_0, \cdot)}(\pk)$
    \State if $\Verify(\pk, i^* - 1, \id^*,  \msg^*, \sigma^*) = 0$, return 0
    \State if $\msg^* \in \mathcal{Q}$, return 0
    \State if $(i^* - 1, \id^*) \in \mathcal{Q}^\Id$, return 0
    \State if $\{1,\ldots,i^* - 1\} \cap \mathcal{Q}^t \neq \emptyset$, return 0
    \State return 1
  \end{algorithmic}
  \caption{The $\EUFCMA$ experiment for a forward-secure time-bound identity-based signature scheme $\TBIDS$.}
  \label{exp:fs-tbids-eufcma}
\end{experiment}
\begin{definition}[$\fsEUFCMA$]
  For any PPT adversary $\advA$, we define the advantage in the $\fsEUFCMA$ experiment $\ExpSigfsEUFCMA{\TBIDS,\advA}$ (cf. \Cref{exp:fs-tbids-eufcma}) as
  \begin{align*}
    \AdvSigEUFCMA{\TBIDS,\advA}(\secpar,n):=\Pr\left[\ExpSigEUFCMA{\fsTBIDS,\advA,n}(\secpar) = 1\right] \text{,}
  \end{align*}
  for an integer $n\in\N$. A time-bound identity-based signature scheme $\fsTBIDS$ is $\fsEUFCMA$-secure, if $\AdvSigfsEUFCMA{\fsTBIDS,\advA}(\secpar, n)$ is a negligible function in $\secpar$ for all PPT adversaries $\advA$.
\end{definition}

Before discussing constructions of the scheme, we want to note that using generic transformation from \cite{DBLP:conf/asiacrypt/GalindoHK06}, $\TBIDS$ and $\fsTBIDS$ can be turned into (forward-secure) identity-based proxy signatures. Indeed, if the identity for delegation is the delegatee's public key, and each signature is extended with a signature by that key, we obtain a proxy signature scheme.

\subsection{Generic \TBIDS\ Construction}
\label{sec:generic-tbids}

First, we start with a generic construction of \TBIDS\ from identity-based signatures. Observe that to achieve an \EUFCMA-secure \TBIDS\ scheme, delegation of the keys is relative to both epochs and identities. So, the idea is to map \TBIDS's epoch and identity to an identity of the underlying identity-based signature scheme. Without the goal to achieve forward-secrecy, this mapping is simply a concatenation of epoch and identity. As the epoch is only used to partition time, it can be viewed as giving a specific meaning to a part of the identity of an \IBS.  We present the scheme in \Cref{sm:tbids-ibs}.

\begin{scheme}[ht]
\begin{description}
  \item[$\underline{\Setup(1^\secpar,n)}\colon$] Let $\pp \gets \IBS.\Setup(1^\secpar)$ with $\IBS.\Id = \{0,1\}^{\lceil \log_2(n) \rceil} \times \Id$ and return $\pp$.
  \item[$\underline{\Gen(\pp)}\colon$] Return $(\pk, \sk) \gets \IBS.\Gen(\pp)$.
  \item[$\underline{\Del(\sk,i,\id)}\colon$] Return $\sk_{i,\id} \gets \IBS.\Del(sk, i \| \id)$.
  \item[$\underline{\Sign(\sk_{i,\id}, \msg)}\colon$] Return $\sigma \gets \IBS.\Sign(\sk_{i,\id}, \msg)$.
  \item[$\underline{\Verify(\pk, i, \id, \msg, \sigma)}\colon$] Return $\IBS.\Verify(\pk, i \| \id, \msg, \sigma)$.
\end{description}
\caption{Generic $\TBIDS$ scheme from any $\IBS$.}
\label{sm:tbids-ibs}
\end{scheme}

$\EUFCMA$ security follows directly the underlying $\IBS$ scheme:
\begin{theorem}
  \label{thm:tbids-ibs}
  If $\IBS$ is $\EUFCMA$-secure, then \Cref{sm:tbids-ibs} is $\EUFCMA$-secure, i.e. if there is an $\EUFCMA$-adversary $\advA$ against $\TBIDS$, then there exists an $\EUFCMA$-adversary against $\IBS$ with
  \[
    \AdvSigEUFCMA{\TBIDS,\advA}(1^\secpar,n) = \AdvSigEUFCMA{\IBS,\advB}(1^\secpar)
    \text{.}
  \]
\end{theorem}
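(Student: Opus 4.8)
The plan is to prove the theorem by a tight, advantage-preserving reduction: from an arbitrary $\EUFCMA$-adversary $\advA$ against $\TBIDS$ I would build an $\EUFCMA$-adversary $\advB$ against the underlying $\IBS$ that runs $\advA$ internally and perfectly simulates the $\TBIDS$ experiment by translating everything through the identity encoding $(i,\id) \mapsto i\|\id$ used in \Cref{sm:tbids-ibs}. Since that encoding places $i$ in the fixed-length block $\{0,1\}^{\lceil \log_2(n)\rceil}$, the map is injective; indeed $\IBS.\Id$ is exactly the product $\{0,1\}^{\lceil \log_2(n)\rceil} \times \Id$, so the translation is unambiguous.

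Concretely, I would have $\advB$ take the public key $\pk$ (and the public parameters, which by \Cref{sm:tbids-ibs} are produced by $\IBS.\Setup$ and hence are distributed exactly as a $\TBIDS$ setup) from its own challenger and pass $\pk$ to $\advA$. A $\Del'(i,\id)$ query of $\advA$ is answered by forwarding $i\|\id$ to $\advB$'s own delegation oracle; a $\Sign'(i,\id,\msg)$ query is answered by calling $\advB$'s own signing oracle on $(i\|\id,\msg)$. In both cases the bookkeeping sets line up: $\advA$'s $\mathcal{Q}$ equals the set of messages $\advB$ has queried, and $(i,\id)\in\mathcal{Q}^\Id$ for $\advA$ exactly when $i\|\id$ was sent to $\advB$'s delegation oracle. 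Because $\Del$, $\Sign$ and $\Verify$ of \Cref{sm:tbids-ibs} are literally the $\IBS$ algorithms on the translated identity, the simulation is perfect.

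When $\advA$ outputs $(\msg^*, i^*, \id^*, \sigma^*)$, $\advB$ outputs $(\msg^*, i^*\|\id^*, \sigma^*)$. I would then verify the three winning conditions: validity transfers since $\TBIDS.\Verify(\pk,i^*,\id^*,\msg^*,\sigma^*) = \IBS.\Verify(\pk,i^*\|\id^*,\msg^*,\sigma^*)$ by definition; the condition $\msg^*\notin\mathcal{Q}$ refers to the same set in both games; and $(i^*,\id^*)\notin\mathcal{Q}^\Id$ translates to $i^*\|\id^*$ never having been queried to $\advB$'s delegation oracle, using injectivity of the encoding. Hence $\advB$ succeeds precisely when $\advA$ does, $\advB$ has essentially the same running time as $\advA$, and
\[
  \AdvSigEUFCMA{\TBIDS,\advA}(1^\secpar,n) = \AdvSigEUFCMA{\IBS,\advB}(1^\secpar).
\]

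There is no real obstacle here — the whole argument is bookkeeping — but the one point I would be careful about is the injectivity of the epoch encoding. If epochs were not padded to the fixed length $\lceil\log_2 n\rceil$, a query $(i,\id)$ could collide in $i\|\id$ with some $(i',\id')$ with $i\neq i'$, and the translation of the freshness and no-delegation conditions would no longer be faithful; the fixed-length encoding fixed by $\Setup$ rules this out.
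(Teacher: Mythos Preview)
Your proposal is correct and follows essentially the same approach as the paper: a tight reduction that runs $\advA$ inside $\advB$, forwards the public key, simulates $\Del'$ and $\Sign'$ via the corresponding $\IBS$ oracles under the encoding $(i,\id)\mapsto i\|\id$, and outputs $\advA$'s forgery translated to the identity $i^*\|\id^*$. Your write-up is in fact more careful than the paper's, which does not explicitly spell out the injectivity of the fixed-length epoch encoding that you rightly flag as the one point requiring attention.
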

\begin{proof}
  Let $\advA$ be an $\EUFCMA$-adversary against $\TBIDS$. We build an $\EUFCMA$-adversary $\advB$ against $\IBS$ with $\IBS.\Id = \{0,1\}^k \times \Id$. Indeed, if $\advB$ is started on a public key $\pk_\IBS$, we run $\advA$ with $\pk_{\TBIDS} \gets \pk_\IBS$ and the oracles are simulated honestly using the corresponding $\IBS$ oracles.
  Once $\advA$ outputs a forgery $\sigma^*$ for $i^*$, $\id^*$, $\msg^*$ and forwards it as a $\IBS$ forgery for $i^* \| \id^*$ and $\msg^*$. A valid forgery for $\TBIDS$ is also a valid forgery for $\IBS$, since if $(i^*, \id^*)$ was not queried on the $\TBIDS.\Del'$ oracle, then neither was $i^* \| id^*$ queried on $\IBS.\Del'$. Similarly, the same is true for $m^*$.
\end{proof}

\subsection{Forward-Secure Construction from \HIBS}
\label{sec:fstbids}

We now consider a direct construction of \fsTBIDS\ from \HIBS. In contrast to \Cref{sec:generic-tbids}, the main difference is the management of the master secret key. The key has to be updated from one epoch to another while still allowing identity-based derivations. Therefore, we follow a standard approach due to Canetti, Halevi and Katz~\cite{DBLP:conf/eurocrypt/CanettiHK03} to achieve forward-secrecy. The idea is to manage the epochs in a tree such that the individual epochs are the leaves of the tree, whereas epoch $0$ is placed at the root. Consequently, when mapping the epochs to keys, we place the master secret key, $\sk_0$, at the root. Once the key is updated to the first epoch, it should no longer be possible to access $\sk_0$, yet deriving keys for all other epochs. Hence, we remove the root key and keep the keys delegated for the two child nodes. For the left child node this process is repeated until the left most leaf node is reached. Later, when updating to the next epoch, one removes the key for the current epoch and, starting from the associated leaf node searches for the first non-removed sibling along to the path to the root node. If this node is not a leaf node, then as in the first step, the associated key is removed and those for the child nodes are kept. For the left child the process is again iterated until a leaf node is reached.

More formally, the key update algorithm traverses the tree in a depth-first manner. Hence the keys are viewed as stack, and when visiting a node, the derived secret keys are pushed onto the stack. We define an algorithm \DFEval\ that performs the stack manipulation (adapted from \cite{DBLP:conf/pkc/DerlerKLRSS18}). Notation-wise, we denote the root node with $w^\varepsilon$ and all other nodes are encoded as binary strings, i.e., for a node $w^{(i)}$, we denote child nodes as $w^{(i0)}$ and $w^{(i1)}$. For encoding the epoch as identities, we define $\binid(i)$ which maps each bit of $i$ (considered as $\lceil \log_2 n \rceil$-bit number) to $0$ and $1$ identities, i.e. $\binid(b^{(1)}\ldots b^{(n)}_2) = (b^{(1)}, \ldots, b^{(n)})$.
\begin{description}
  \item[$\DFEval(s)\colon$] On input of the stack $s$, perform the following steps:
    \begin{itemize}
      \item Pop the topmost element, $(\sk^{(i)})$, from the stack $s$.
      \item Repeat while $w^i$ is an internal node:
        \begin{itemize}
          \item Set $\sk^{(i0)} \gets \HIBS.\Del(\sk^{(i)}, \binid(i0))$
          \item Set $\sk^{(i1)} \gets \HIBS.\Del(\sk^{(i)}, \binid(i1))$
          \item Push $\sk^{(i1)}$ onto $s$ and set $i \gets i0$
        \end{itemize}
      \item Return $\sk^{(i)}$ and the new stack $s$
    \end{itemize}
\end{description}
We give the full scheme in \Cref{sm:tbids-hibe}.

\begin{scheme}[t]
\begin{description}
  \item[$\underline{\Setup(1^\secpar,n)}\colon$] Let $\pp \gets \HIBS.\Setup(1^\secpar, 1 + \lceil \log_2(n) \rceil)$ with $\HIBS.\Id$ such that $\{0,1\} \subset \HIBS.\Id$ and $\Id \subset \HIBS.\Id$. Return $\pp$
  \item[$\underline{\Gen(\pp)}\colon$] Set $(\pk, \sk_0) \gets \HIBS.\Gen(\pp)$, push $\sk_0$ onto the empty stack $s$, and return $(\pk, (\sk_0, s))$.
  \item[$\underline{\Update(\sk_i)}\colon$] Parse $\sk_i$ as $(\sk_i, s)$ and return $\DFEval(\sk^{(i)}, s)$.
  \item[$\underline{\Del(\sk_i,\id)}\colon$] Parse $\sk_i$ as $(\sk^{(i)}, \cdot)$ and return $\sk_{i,\id} \gets \HIBS.\Del(\sk^{(i)}, (\binid(i), \id))$
  \item[$\underline{\Sign(\sk_{i,\id}, \msg)}\colon$] Return $\HIBS.\Sign(\sk_{i,\id}, (\binid(i), \id), \msg)$.
  \item[$\underline{\Verify(\pk, i, \id, \msg, \sigma)}\colon$] Return $\HIBS.\Verify(\pk, (\binid(i), \id), \msg)$.
\end{description}
\caption{Generic $\fsTBIDS$ scheme from a $\HIBS$.}
\label{sm:tbids-hibe}
\end{scheme}

Next we show that \Cref{sm:tbids-hibe} is in fact $\fsEUFCMA$-secure.
\begin{theorem}
  \label{thm:fstibds}
  If $\HIBS$ is $\EUFCMA$-secure, then \Cref{sm:tbids-hibe} is \\$\fsEUFCMA$-secure. For an $\EUFCMA$-adversary $\advA$ against \Cref{sm:tbids-hibe}, there exists an $\EUFCMA$-adversary against $\advB$ such that
  \[
    \AdvSigfsEUFCMA{\fsTBIDS,\advA}(1^\secpar,n) = \AdvSigEUFCMA{\HIBS,\advB}(1^\secpar, \lceil \log_2(n) \rceil + 1)
    \text{.}
  \]
\end{theorem}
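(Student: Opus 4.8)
The plan is to reduce $\fsEUFCMA$-security of \Cref{sm:tbids-hibe} directly to $\EUFCMA$-security of the underlying $\HIBS$ with hierarchy parameter $\lceil \log_2(n) \rceil + 1$, mirroring the structure of the proof of \Cref{thm:tbids-ibs} but with the extra bookkeeping needed to simulate the $\Update'$ oracle. Given an $\fsEUFCMA$-adversary $\advA$ against the $\fsTBIDS$ scheme, I would build $\advB$ which receives a $\HIBS$ public key $\pk_\HIBS$ and runs $\advA$ on $\pk_\TBIDS \gets \pk_\HIBS$. The central observation is the correspondence between identities: an epoch $i$ together with an identity $\id$ in the $\fsTBIDS$ scheme corresponds to the $\HIBS$ identity $(\binid(i), \id)$ of length $\lceil \log_2(n)\rceil + 1$, and a secret key $\sk_i$ for epoch $i$ (which the scheme represents as a stack of $\HIBS$ keys delegated along the tree path) corresponds to a collection of $\HIBS$ keys for the identity-prefixes $\binid(i')$ of the appropriate sibling nodes.

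Next I would describe the oracle simulation. The $\Sign'$ and $\Del'$ oracles are handled exactly as in \Cref{thm:tbids-ibs}: $\advB$ forwards the query $(i,\id,\msg)$ (resp. $(i,\id)$) to its own $\HIBS.\Sign'$ (resp. $\HIBS.\Del'$) oracle on identity $(\binid(i),\id)$. For the $\Update'$ oracle on input $i$, $\advB$ must produce $\sk_i$, i.e., the stack of $\HIBS$ secret keys that $\DFEval$ would have produced after $i$ update steps. By the structure of the Canetti--Halevi--Katz tree, this stack consists of $\HIBS$ keys for a set of nodes each of which is a sibling hanging off the root-to-leaf path of epoch $i$; crucially none of these nodes is a prefix of $\binid(i')$ for any $i' \leq i$ that is also $\geq i^*$... more precisely, each key held in $\sk_i$ is for a node $w$ such that $w$ is \emph{not} a prefix of $\binid(i^*-1)$ whenever $i > i^*-1$. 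So $\advB$ obtains these keys by querying its $\HIBS.\Del'$ oracle on the corresponding identities and assembling them into a stack in the order $\DFEval$ expects. When $\advA$ finally outputs $(\msg^*, i^*, \id^*, \sigma^*)$, $\advB$ outputs $(\msg^*, (\binid(i^*-1), \id^*), \sigma^*)$ as its $\HIBS$ forgery.

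It remains to check that a successful $\fsTBIDS$ forgery translates to a successful $\HIBS$ forgery, i.e., that $\advB$'s output satisfies the $\HIBS$ winning conditions whenever $\advA$ wins. Validity of the signature is immediate from the definition of $\Verify$ in \Cref{sm:tbids-hibe}. The message-freshness condition $\msg^* \notin \mathcal{Q}$ transfers verbatim. The prefix condition is where the argument has to be assembled carefully: the $\fsTBIDS$ winning conditions guarantee (i) $(i^*-1,\id^*) \notin \mathcal{Q}^\Id$ and (ii) $\{1,\dots,i^*-1\} \cap \mathcal{Q}^t = \emptyset$. From (i), no $\HIBS.\Del'$ query for the identity $(\binid(i^*-1),\id^*)$ was made directly; from (ii), every $\Update'$ query was for some $i > i^*-1$, and the key-derivation tree guarantees that none of the node-keys handed out for such an epoch lies on the path to the leaf $\binid(i^*-1)$, hence none is a prefix of $(\binid(i^*-1),\id^*)$. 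Combining these, no $\HIBS.\Del'$ query was ever made for a prefix of the forgery identity, so $\advB$ wins. The main obstacle is precisely this last step: formalizing the invariant on which nodes $\DFEval$ retains in the stack after $i$ updates, and verifying that for every $i > i^*-1$ none of those nodes is an ancestor of the leaf for epoch $i^*-1$. This is a combinatorial property of the depth-first tree traversal; once stated as a lemma about $\DFEval$ it plugs directly into the reduction, and the advantage bound is an exact equality since the reduction is perfectly faithful and loses nothing.
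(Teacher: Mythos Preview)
Your proposal is correct and follows essentially the same approach as the paper: reduce to $\HIBS$ $\EUFCMA$ by forwarding $\Sign'$ and $\Del'$ directly, simulate $\Update'(i)$ by querying $\HIBS.\Del'$ for the leaf $\binid(i)$ together with the right-sibling nodes along its root path, and then argue that the winning conditions force all such queried nodes to lie strictly to the right of leaf $i^*-1$ so that none is a prefix of $(\binid(i^*-1),\id^*)$. You spell out the tree-invariant argument for $\DFEval$ more carefully than the paper does (the paper dispatches it in one sentence), but the underlying reduction is identical.
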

\begin{proof}
  The proof follows the same idea as the proof of \Cref{thm:tbids-ibs}. We note, that the $\Update'$ oracle cannot be simulated honestly. However, it is possible to simulate it in the following way: given an epoch $i$, directly obtain $\sk_i$ as $\HIBS.\Del'(\sk, \binid(i))$ and build the stack by querying $\HIBS.\Del'$ by querying the keys for all right sibling nodes along the path from the $i$-th leave to the root.

  Note that for a valid forgery, we have that for no $i < i^*$ the $\Update'$ oracle was queried. Consequently, the $\Del'$ oracle of the $\HIBS$ is never queried on a prefix of $\binid(i^*)$, and thus the forgery is also valid for $\HIBS$.
\end{proof}

As for $\TBIDS$, $\EUFCMA$-security of the underlying $\HIBS$ implies the $\EUFCMA$-security of \Cref{sm:tbids-hibe}:
\begin{theorem}
  If $\HIBS$ is $\EUFCMA$-secure, then \Cref{sm:tbids-hibe} is $\EUFCMA$-secure. For an $\EUFCMA$-adversary $\advA$ against \Cref{sm:tbids-hibe}, there exists an $\EUFCMA$-adversary against $\advB$ such that
  \[
    \AdvSigEUFCMA{\fsTBIDS,\advA}(1^\secpar,n) = \AdvSigEUFCMA{\HIBS,\advB}(1^\secpar, \lceil \log_2(n) \rceil + 1)
    \text{.}
  \]
\end{theorem}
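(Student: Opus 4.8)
The plan is to replay the proof of \Cref{thm:tbids-ibs} almost verbatim, with the underlying primitive being a $\HIBS$ rather than an $\IBS$, and with the role played there by the concatenated identity $i \| \id$ now taken over by the hierarchical identity $(\binid(i), \id)$. The key point to keep in mind is that the (non–forward-secure) $\EUFCMA$ notion for $\fsTBIDS$ is simply the experiment of \Cref{exp:tbids-eufcma} instantiated with \Cref{sm:tbids-hibe}: the adversary $\advA$ only has access to $\Del'$ and $\Sign'$, and there is \emph{no} $\Update'$ oracle. This is exactly what makes the present case strictly simpler than \Cref{thm:fstibds}.

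First I would set up the reduction $\advB$: it receives a public key $\pk$ of a $\HIBS$ with hierarchy parameter $\lceil \log_2(n) \rceil + 1$ and identity space containing $\{0,1\}$ and $\Id$, and runs $\advA$ on $\pk$. A query $\Del'(i, \id)$ of $\advA$ is answered by forwarding the hierarchical identity $(\binid(i), \id)$ to the $\HIBS$ delegation oracle, and a query $\Sign'(i, \id, \msg)$ is answered by forwarding $((\binid(i), \id), \msg)$ to the $\HIBS$ signing oracle. By the definition of \Cref{sm:tbids-hibe}, where $\Del$, $\Sign$ and $\Verify$ are precisely the corresponding $\HIBS$ operations applied to the identity $(\binid(i), \id)$, this yields a perfect simulation of the $\EUFCMA$ experiment for $\fsTBIDS$, and in particular the $\HIBS$ signing-query set coincides with $\mathcal{Q}$.

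When $\advA$ halts with a forgery $(\msg^*, i^*, \id^*, \sigma^*)$, $\advB$ outputs $(\msg^*, (\binid(i^*), \id^*), \sigma^*)$, and I would then check the three winning conditions of the $\HIBS$ experiment. Verification carries over by construction of the scheme, and $\msg^* \notin \mathcal{Q}$ carries over since the two signing-query sets are identical. For the prefix condition, the only real point to argue is that $\binid$ maps every epoch to a binary string of the fixed length $\lceil \log_2(n) \rceil$; hence every identity $(\binid(i), \id)$ ever submitted to the $\HIBS$ delegation oracle has the same length $\lceil \log_2(n) \rceil + 1$ as the target $(\binid(i^*), \id^*)$, so it can be a prefix of the target only if it \emph{equals} it, i.e.\ only if $i = i^*$ and $\id = \id^*$ (using injectivity of $\binid$). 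But $(i^*, \id^*) \in \mathcal{Q}^\Id$ is exactly what a successful run of $\advA$ in the $\fsTBIDS$ experiment forbids, so no such delegation query was made and the forgery is valid for $\HIBS$. Therefore $\advB$ succeeds whenever $\advA$ does, giving
\[
  \AdvSigEUFCMA{\fsTBIDS,\advA}(1^\secpar,n) = \AdvSigEUFCMA{\HIBS,\advB}(1^\secpar, \lceil \log_2(n) \rceil + 1)\text{.}
\]

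I do not expect a genuine obstacle here: in contrast to \Cref{thm:fstibds}, there is no $\Update'$ oracle whose honest simulation is impossible, so the reduction is tight and direct. The one place warranting care is making explicit that the fixed-length epoch encoding $\binid$ prevents a delegation query for any distinct pair $(i,\id)$ from accidentally becoming a prefix of the challenge identity $(\binid(i^*),\id^*)$.
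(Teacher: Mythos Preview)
Your proposal is correct and follows the same reduction as the paper's one-line sketch: map the pair $(i,\id)$ to the hierarchical identity $(\binid(i),\id)$, forward all $\Del'$ and $\Sign'$ queries to the corresponding $\HIBS$ oracles, and output the forgery on the mapped identity. The paper additionally points to the $\Update$-oracle simulation from \Cref{thm:fstibds}, but as you rightly observe the $\EUFCMA$ experiment of \Cref{exp:tbids-eufcma} exposes no $\Update'$ oracle; the only place $\Update$ enters is implicitly inside $\Del'$ and $\Sign'$ to reach $\sk_i$, and since the resulting $\sk_{i,\id}$ is just a $\HIBS$ key for $(\binid(i),\id)$ your direct call to the $\HIBS$ delegation oracle already produces the right object, making your argument if anything cleaner than what the paper's sketch suggests.
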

The proof is the same as for \Cref{thm:tbids-ibs} with the simulation of the $\Update$-oracle as in \Cref{thm:fstibds}; hence we do not repeat it.

\subsection{Practical Considerations}
\label{sec:practical-precomputation}

When implementing the $\TBIDS$ and $\fsTBIDS$ schemes from \Cref{sm:tbids-ibs} and \Cref{sm:tbids-hibe}, respectively, based on the BBG HIBE from \Cref{sm:bbg_cca}, some optimizations for efficient signing and verification can be applied. We discuss them below.

\paragraph{Deterministic Verification}
Contrary to verification algorithms of standard signature schemes (e.g., EdDSA and ECDSA), the verification algorithm of Naor-transformed signature schemes is probabilistic. Consequently, the verifier needs access to a good source of randomness which makes implementations even more complex. Taking a step back, the goal of the test encryption and decryption is to decide whether the derived key in the signature is a correctly derived key for the identity. Depending on the underlying \HIBE\ it might, however, be possible to check the correctness of the key with a deterministic method~\cite{DBLP:conf/asiacrypt/BonehLS01,DBLP:conf/eurocrypt/Waters05}.

Theoretically, one could attach a non-interactive zero-knowledge proof that certifies the correctness of the key derivation. Instead of the test decryption, the verification of signatures would then verify the zero-knowledge proof. More practically, a \HIBE\ like BBG allows one to check the derivation of the keys using pairing equations. Assuming a derived key for the last level, $\sk_{\id_1, \ldots, \id_\ell} = (\sk_1, \sk_2)$ we have that
\[
  e\mleft(h_1^{H(\id_1)} \cdots h_\ell^{H(\id_\ell)} \cdot g_3, \sk_2\mright) \cdot e\mleft(g_2, \pk\mright) = e\mleft(\sk_1, \hat{g}\mright) \text{.}
\]
Conversely, if we have a key that satisfies the equation, then it is also able to decrypt any ciphertext. Thereby, we obtain a faster and deterministic verification algorithm.

\paragraph{Faster Signing with Precomputation}
Given that the IDs $\id_1$ to $\id_{\ell-1}$ are fixed for the signer, signing can be implemented more efficiently without computing $h_i^{H(\id_i)}$ all the time. Given the secret key $\sk_{i,\id} = (a_0, a_1, b_\ell)$, the signature can be computed as
\[
  \mleft(a_0 \cdot b_\ell^{H(\msg)} \cdot \mleft(t \cdot h_k^{H(\msg)}\mright)^w, a_1 \cdot \hat{g}^w\mright)
\]
where $t \gets h_1^{H(\id_1)} \cdots h_{\ell-1}^{H(\id_2)} \cdot g_3$. This $t$ can be either computed when deserializing the secret key or directly be stored in the serialized key. It can be reused for all subsequent signing operations, thus saving $O(\ell)$ group operations and rendering signing runtime complexity independent of $\ell$. Similarly, when verifying signatures this value can also be precomputed when deserializing the public key. In this case, there is only a benefit if multiple signatures from the same signer are verified at the same time.

\section{Integration into TLS}
\label{sec:integration}
We integrated our $\TBIDS$ scheme into TLS 1.3, the newest standard of the TLS family.
Since our implementation does not require any changes to the TLS handshake or other parts of the protocol, $\TBIDS$ could fit older versions of TLS as well.

\paragraph{Status quo: Sharing the Secret Key}
The typical CDN setup involves a client, a CDN, and an origin server, as shown in \Cref{fig:tls:setup}.
A client, usually a web browser, initiates the connection and wants to retrieve some remote server data.
To offload work and improve the connection speed, the origin server delegates some or all of its work to a CDN.
That happens either by uploading data to the CDN beforehand or using the CDN as a cache between the client and the origin server.
In the latter case, a CDN node retrieves data from the origin server as soon as a client requests it and keeps it locally, making it faster available for other clients in the area.
Therefore, the origin server's DNS entry needs to point to the local CDN node to ensure the requests are forwarded to the CDN.
The CDN can then serve data on behalf of the origin server, and consequently, acts in the origin server's name.
All connections between client and CDN, and between CDN and origin server are secured by TLS and therefore authenticated.

\begin{figure}[th]
    \centering
    \newcommand{\domain}{\textcolor{black}{\ensuremath{\mathsf{ccsw.io}}}}

\begin{tikzpicture}
  \begin{umlseqdiag}
  
    \umlobject[no ddots, x=7]{Server} 
    \umlobject[no ddots, x=0]{CA} 
    
    \begin{umlcallself}
      [side=left,
      padding=1, 
      op={generate \makecell[l]{\skServer\\$pk_{server}$}}]
      {Server} 
  
    \begin{umlcallself}
      [side=left,
      padding=1,
      dt=5, 
      op={generate $CSR$ \makecell[l]{pk: $pk_{server}$\\CN: \domain}}]
      {Server} 
    \end{umlcallself} 
    
    \begin{umlcall}
      [dt=4,
      op={$CSR$}, 
      return={$cert_{server}$}]  %
      {Server}{CA} 
      
      \begin{umlcallself}
      [padding=1, 
      dt=2, 
      op={$\Sign$ CSR}]{CA} %
      \end{umlcallself} 
    \end{umlcall} 
    
    \umlcreatecall[dt=5, no ddots, x=0]{Server}{CDN}
    
    \begin{umlcall}
      [padding=0,
      dt=2,
      op=\skServer]{Server}{CDN} 
      \begin{umlcall}
        [padding=0,
        op={$cert_{server}$}]{Server}{CDN} 
      \end{umlcall} 
    \end{umlcall} 
        
    \end{umlcallself} 
    
  \end{umlseqdiag} 
\end{tikzpicture}
    \caption{Typical setup of a PKI involving a CDN. The origin server shares its secret key \skServer{} with the CDN.}
    \label{fig:tls:setup}
\end{figure}
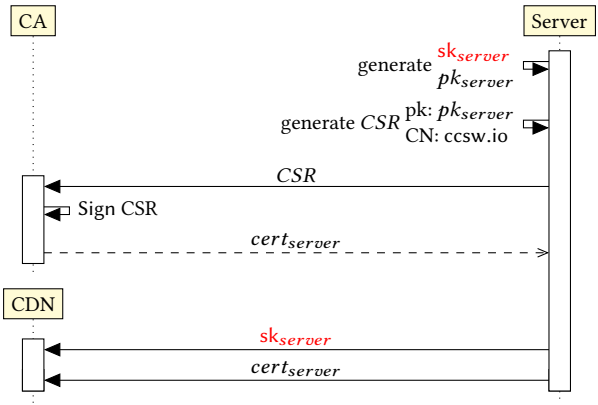

\begin{figure}[th]
    \centering
                          
                       
\begin{tikzpicture} 
  \begin{umlseqdiag} 
  
    \umlobject[no ddots]{Client} 
    \umlobject[no ddots, x=7]{CDN}

    \begin{umlcall}
      [padding=0,
      op={ClientHello}]{Client}{CDN} %

    \begin{umlcall}
      [padding=1,
      type=return, 
      op={ServerHello}]{CDN}{Client} 
    \end{umlcall} 
    
    \begin{umlcall}
      [padding=1, 
      dt=2,
      type=return, op={Encrypted Extensions}]{CDN}{Client} 
    \end{umlcall} 
    
    \begin{umlcall}
      [padding=1, 
      dt=2,
      type=return, 
      op={Server Certificate ($cert_{server}$)}]{CDN}{Client} 
    \end{umlcall}

    \begin{umlcallself}
      [padding=1, 
      dt=2,
      side=left,
      op={\makecell[l]{$cv=\Sign(\skServer, handshake)$}}]{CDN} %
    \end{umlcallself}
    
    \begin{umlcall}
      [padding=1, 
      dt=4,
      type=return, 
      op={Certificate Verify ($cv$)}]{CDN}{Client} 
    \end{umlcall} 
    
    \begin{umlcall}
      [padding=1, 
      dt=2,
      type=return, 
      op={Handshake Finished}]{CDN}{Client} 
    \end{umlcall} 
    
    \begin{umlcallself}
      [padding=1, 
      dt=2,
      op={$\Verify(pk_{server}, handshake, cv)$}]{Client} 
    \end{umlcallself} 
    
    \begin{umlcallself}
      [padding=1, 
      dt=2,
      op={Verify PKI \& cert validity}]{Client} 
    \end{umlcallself} 
    
    \begin{umlcall}
      [padding=0, op={Client Finished}]{Client}{CDN} 
    \end{umlcall} 
    \end{umlcall}

  \end{umlseqdiag} 
\end{tikzpicture}
    \caption{TLS flow of the communication of a client with a CDN. The CDN uses the origin server secret key \skServer{} to sign the TLS handshake.}
    \label{fig:tls:handshake}
\end{figure}
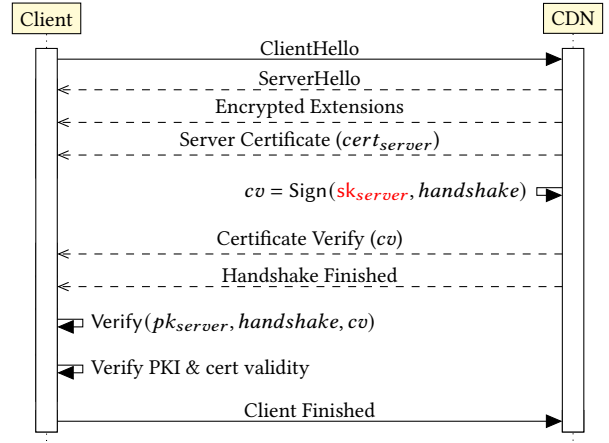

To serve content securely using TLS, a CDN needs to be able to sign TLS sessions in the name of the origin server, as shown in \Cref{fig:tls:handshake}.
Usually, that means a CDN needs access to the secret key \skServer{} of a certificate, which is valid for the origin server domain.
One way to achieve that is to share the secret key with the CDN, as shown in \Cref{fig:tls:setup}.
Another approach is to allow the CDN to acquire a valid certificate on its own.

Outsourcing a secret key is often neither easily possible nor desirable.
For example, in some industries, there might be regulations in place which disallow that.
Also, the idea of a secret key is for it to remain secret.
Sharing a secret key, or even allowing a CDN to acquire its own, implicates loss of control, e.g., in case of a CDN compromise or malicious CDN.

\begin{figure}[th]
    \centering
    \newcommand{\domain}{\textcolor{black}{\ensuremath{\mathsf{ccsw.io}}}}
\newcommand{\epochform}{\textcolor{black}{\ensuremath{\mathsf{epoch}}}}

\begin{tikzpicture} 
  \begin{umlseqdiag}
  
    \umlobject[no ddots, x=7]{Server} 
    \umlobject[no ddots, x=0]{CA}

    \begin{umlcallself}
      [side=left,
      padding=1,
      op={$\skServer, \pkServer = \Gen(1^K, n)$}]
      {Server}

    \begin{umlcallself}
      [side=left,
      padding=1,
      dt=4, 
      op={generate $CSR$  \makecell[l]{pk: $\pkServer$\\CN: \domain}}]
      {Server} 
    \end{umlcallself} 
    
    \begin{umlcall}
      [dt=4, 
      op={$CSR$}, 
      return={$cert_{server}$}] %
      {Server}{CA} 
      
      \begin{umlcallself}
      [padding=1,
      dt=2, 
      op={$\Sign$ CSR}]{CA} %
      \end{umlcallself} 
      
    \end{umlcall} 
     \begin{umlcallself}
      [side=left,
      padding=1,
      dt=4,
      op={$\skDeleg = \Del(\skServer, \epochform, \domain)$}]
      {Server} 
    \end{umlcallself} 
    
    \umlcreatecall[dt=4, no ddots, x=0]{Server}{CDN}
    
    \begin{umlcall}
      [padding=0,
      dt=3, 
      type=asynchron,
      op={Mutual Authentication}]{Server}{CDN} 
      \begin{umlcall}
        [padding=0,
        dt=4,
        op=$\skDeleg$]{Server}{CDN} 
      \end{umlcall}
      \begin{umlcall}
        [padding=0,
        op={$cert_{server}$}]{Server}{CDN} 
      \end{umlcall}
    \end{umlcall} 
    
  \end{umlcallself} 
  \end{umlseqdiag} 
\end{tikzpicture}
    \caption{TLS setup involving $\TBIDS$. The origin server creates a delegated key \skDeleg{} and shares it with the CDN. The secret key \skServer{} never leaves the origin server.}
    \label{fig:tbids:setup}
\end{figure}
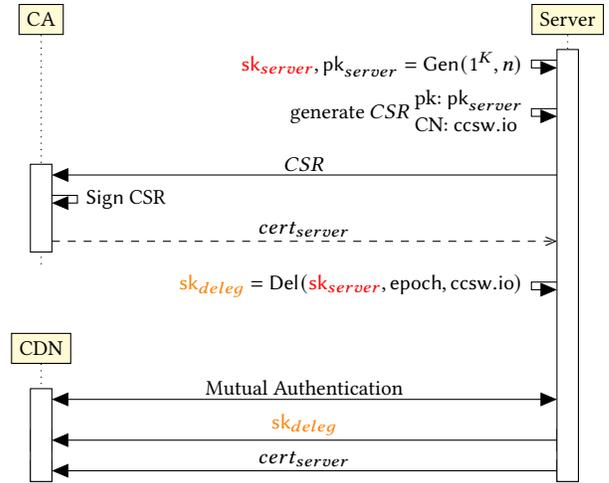

\begin{figure}[th]
    \centering
    \newcommand{\domain}{\textcolor{black}{\ensuremath{\mathsf{ccsw.io}}}}
\newcommand{\epochform}{\textcolor{black}{\ensuremath{\mathsf{epoch}}}}

\begin{tikzpicture} 
  \begin{umlseqdiag} 
  
    \umlobject[no ddots]{Client} 
    \umlobject[no ddots, x=7]{CDN}

    \begin{umlcall}
      [padding=0,
      op={ClientHello}, 
      ]{Client}{CDN}

    \begin{umlcall}
      [padding=1, 
      type=return, 
      op={Server Hello}]{CDN}{Client} 
    \end{umlcall} 
    
    \begin{umlcall}
      [padding=1, 
      dt=2,
      type=return, 
      op={Encrypted Extentions}]{CDN}{Client} 
    \end{umlcall} 
    
    \begin{umlcall}
      [padding=1, 
      dt=2,
      type=return, 
      op={Server Certificate ($cert_{server}$)}]{CDN}{Client} 
    \end{umlcall} 
    
    \begin{umlcallself}
      [padding=1, 
      dt=2,
      side=left,
      op={$\sigma=\Sign(\skDeleg, handshake)$}]{CDN} 
    \end{umlcallself}
    
    \begin{umlcall}
      [padding=1, 
      dt=4,
      type=return, 
      op={Certificate Verify ($\sigma$)}]{CDN}{Client} 
    \end{umlcall} 
    
    \begin{umlcall}
      [padding=1,
      dt=2,
      type=return, 
      op={Handshake Finished}]{CDN}{Client} 
    \end{umlcall} 
    
    \begin{umlcallself}
      [padding=1,
      dt=1,
      op={$epoch = getCurrentEpoch()$}]{Client} 
    \end{umlcallself} 
    
    \begin{umlcallself}
      [padding=1,
      dt=1,
      op={$\Verify(\pkServer, epoch, \domain, handshake, \sigma)$}]{Client} 
    \end{umlcallself} 
    
    \begin{umlcallself}
      [padding=1,
      dt=1,
      op={Verify PKI \& cert validity}]{Client} 
    \end{umlcallself} 
    
    \begin{umlcall}
      [padding=0, op={Client Finished}]{Client}{CDN} 
    \end{umlcall} 
    
    \end{umlcall}     
    
  \end{umlseqdiag} 
\end{tikzpicture}
    \caption{TLS handshake utilizing $\TBIDS$. The CDN uses the delegated key \skDeleg{} to sign the handshake, while the client uses the server's public key \pkServer{} and the \Verify~ function defined in \Cref{sm:tbids-ibs} to verify the signature.}
    \label{fig:tbids:handshake}
\end{figure}
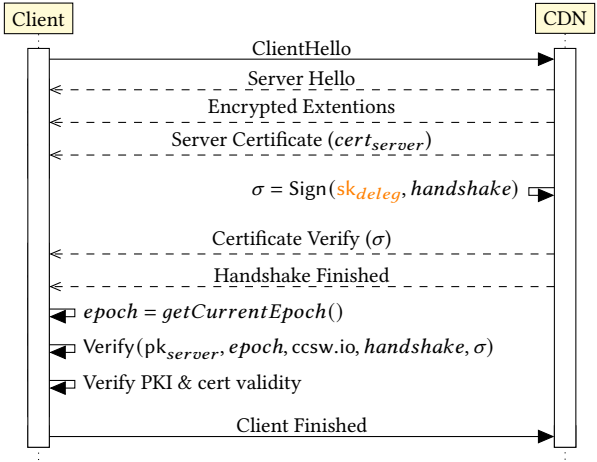

\subsection{Our Approach: Delegating the Secret Key}
\label{sec:approach}
To solve these issues, we use the  $\TBIDS$ scheme introduced in \Cref{sec:crypto} to enable an origin server to delegate the capability to sign on its behalf without sharing the secret key.
The setup is shown in \Cref{fig:tbids:setup}.
It is done by deriving a delegated key \skDeleg{} from the secret key \skServer{} by the origin server.
The server sends the delegated key to the CDN, which uses it for signing the TLS handshakes.
Clients use the public key from the origin server certificate to verify this signature, as shown in \Cref{fig:tbids:handshake}.
No changes to the TLS handshake are required.

A delegation derivation also involves the domain name of the origin server and a validity epoch.
Including the domain name can be used to limit the validity of the delegation to a subdomain.
This is further described in \Cref{sec:cryptoDef}.
The epoch restricts the validity period of the delegation, thus enabling short-lived delegations.
The origin server periodically derives a new delegation and sends it to the CDN over a mutually authenticated channel.
A CDN server always uses the delegation that is currently valid, thus signed for the current epoch.
To verify the signature, a client inputs the subdomain it connected to and the current epoch, together with the handshake’s hash.
If the signature verification process finds that those same input parameters produced the derived signature, it will assure its correctness.
The advantages of this approach are:
\begin{enumerate*}
    \item The CDN only holds a derived key, and the secret key never leaves the origin server. Therefore, the origin server remains in control of the private keys.
    \item The derived key only lasts for a limited amount of time, so the epoch limitation works as an inexpensive revocation alternative.
    \item If \fsTBIDS\ is used, the origin server's private key additionally enjoys forward-security properties.
    \item No changes to the TLS handshake are needed. Support for an additional signature algorithm is enough to integrate our approach.
\end{enumerate*}

As additional protection of the secret key, the origin server owner may use a dedicated keyserver, thus moving the key material to a different server than the data it serves.
The delegated key gets pushed to the CDN, allowing tighter network protection of the (key) server, since no incoming connections are necessary.
Since that is happening before the delegation becomes valid, there is no delay due to delegation or transport.
Besides, it is always possible to store the key material on secure hardware like a hardware security module (HSM).
Precomputation of delegated keys is also possible, allowing even more flexible key storage.

\begin{definition}[Epoch]
    We define an epoch as an integer, low-resolution timestamp, forming the identifier of a time period of length $\epochlength$. It is easily possible to derive the identifier from any timestamp or the clock, given by $\timestamp$. This timestamp represents the number of seconds since some $T_0$, e.g., the \textit{Unix epoch} or the certificate's \textit{NotBefore} field.
\end{definition}
An example of an algorithm is used in the TOTP protocol~\cite{DBLP:journals/rfc/rfc6238} and can be formulated as $epoch = \left\lfloor{\timestamp_{\UTC}}/{\epochlength}\right\rfloor$.

During signature verification, the client itself is deriving the current epoch from the clock.
Thus, since the CDN does not transmit the epoch, a malicious CDN cannot use an old delegation by convincing the client to use an old timestamp.
That ensures only delegations for the current time are valid.

\paragraph{Considerations on the Length of an Epoch}
Since the length of an epoch is neither part of the derivation itself nor a parameter part of the TLS handshake, it needs to be agreed on beforehand (e.g., defined in a corresponding standard document).
Similar to the validity period of an OCSP response, the length of an epoch can range from a few minutes up to a week \cite[Figure 8]{DBLP:conf/imc/ChungL0CLMMRSW18}.

On the one hand, the epoch length should not be too long.
While it is not possible to revoke a delegation after it has been issued and pushed out, revoking the certificate itself is possible.
By doing so, the delegation loses its trust since the certificate serving as the delegation's basis is no longer valid.
Since the revocation of a certificate often comes with some operational cost~\cite{DBLP:conf/imc/ChungL0CLMMRSW18,DBLP:conf/ndss/StarkHIJB12,DBLP:conf/pam/ZhuAH16}, it makes sense to set the length of an epoch to a short value.

On the other hand, end-user clients often do not have a precise clock time available~\cite{DBLP:conf/ccs/AcerSFFBDBST17}.
To counteract that, we propose to consider the epoch before and after the current one during the delegation's validation.
Depending on the implementation, the latter proposition requires up to two additional signature verifications.

\paragraph{Implication on Revocation}
While a delegated private key enables a third party (CDN) to use the corresponding certificate, it does not remove other PKI mechanisms currently in use.
A client is expected to verify the validity of the used certificate in addition to verifying the delegation.
As a consequence, revocation mechanisms like CRL and OCSP can still be used.
Furthermore, an epoch cannot expand the validity period of the certificate itself, although it provides an additional way to further limit the validity period.

\subsection{Changes to Existing TLS Implementations}
\label{sec:changes}
In this section, we describe changes that we had to make for implementing the scheme into an existing TLS library.

\paragraph{Certificate and Certificate Authority}
X.509 certificates are flexible enough to support public keys of new signatures schemes as long as an algorithm specifier (OID) is assigned, and an ASN.1 encoding is defined for the keys. Therefore, the library implementing \TBIDS\ support has to provide the handling of this data. Other specifying an OID and the encoding, no changes are required to the X.509 specification. For certificate authorities, the process of signing $\TBIDS$ public keys is not different from signing other public keys. For checking the Certificate Signing Requests (CSR), CAs require a $\TBIDS$ implementation to verify the self-signature.

\paragraph{Origin Server}
The server uses its secret key \skServer{} and a $\TBIDS$ tool to periodically derive a delegated key \skDeleg{}.
It then pushes the delegation to the CDN before the delegation becomes valid.
To do so, it may use a standard HTTP client and a mutually authenticated TLS connection.
Consequently, the latter does not involve changes to existing layers or software.

\paragraph{Content Delivery Network}
During setup and once before every epoch, the CDN receives a delegation key \skDeleg{} from the origin server.
Also, it retrieves the corresponding certificate, which is valid for the domain it serves.
During each new client connection, the CDN uses the received delegated key to sign the TLS handshake's hash.
Sending the resulting signature as \texttt{CertificateVerify} message to the client ensures the authentication of the connection.
To enable that, we integrated the $\TBIDS$ library in the TLS stack.
All protocol operations are part of the standard TLS handshake, and no additional ones are required outside.

\paragraph{Client}
We also extended the crypto library used by the TLS stack on the client with our $\TBIDS$ scheme.
Furthermore, we extended the TLS stack to support the handling of $\TBIDS$ public keys.
That allows the client to use standard X.509 certificates containing a public key of our scheme to verify the signature issued by the CDN.
Besides, it enabled the client to advertise support for $\TBIDS$ keys in the initial \texttt{ClientHello} message.
The required changes are limited to those commonly required to add a new signature algorithm.

In addition, the client carries out standard PKI checks to authenticate the server certificate, namely certificate chain check, revocation check, hostname validation, validity in time, and check corresponding to possible TLS extensions. No adaptions to those checks were required.

\begin{remark}[Implementation in TLS Libraries and Major Browsers]

Pairing libraries are widely available nowadays.
For libraries written in C or any language with C bindings available, relic~\cite{relic-toolkit} is a good choice to implement the $\TBIDS$ scheme. In particular, since relic implements optimization for the computation of pairing products.
Hence, integrating the scheme in OpenSSL or any of its forks as well as GnuTLS is easily possible.
Similarly, integration in major browsers such as Chrome and Firefox is thus also possible.
We also want to note that pairing libraries are also available for modern programming languages such as Rust.\footnote{For example, \url{https://github.com/zkcrypto/pairing}, retrieved May 04, 2020.}
Hence, even with Firefox's ongoing transition to Rust, integration of $\TBIDS$ is possible.
We also want to note the ongoing standardization effort of pairing-friendly curves to ensure compatibility between implementations~\cite{draft-pairing}.
\end{remark}

\subsection{Other Applications}
\label{sec:other}
\paragraph{TLS without CDNs}
In \Cref{sec:approach}, we described the application of $\TBIDS$ in the context of a CDN.
However, the scheme can also be used without a CDN.
For example, it is possible to use a delegated private key directly on the origin server or respective load balancers.
In that scenario, the same entity is in control of all servers.
The approach still enables extended protection for the private key by shielding it from the public network and reducing the risk of key compromise at the endpoints.

A key server can create a new delegation once per epoch and push it to the origin servers, and therefore does not need to be reachable from outside or even on all the time.
By doing so, not only the private key gets better protection, but the operator also gains the benefits of a short-lived certificate.
That means the damage in case of a compromise of the (derived) private key is limited to the length of the (short) epoch instead of the (longer) validity period of the certificate itself, even in cases where other revocation mechanisms are not available.

\paragraph{Restriction of connection types}
In our approach, we describe how time information can be used in an $\HIBS$ scheme to limit the validity period of a signature (and thus derived secret key) for $\TBIDS$.
It is also possible to use other information as an identity in a $\TBIDS$ scheme.
For example, one could put a constraint on the type of connection the TLS session is encapsulating.
That enables limiting the use of the delegation to, e.g., HTTP or SMTP connections.
Similar to the Extended Key Usage extension~\cite{DBLP:journals/rfc/rfc5280}, it is also possible to put further constraints on the usage of the derived key.

To achieve that, one needs to extend the modifications we describe in \Cref{sec:approach} and add a level that uses a normalized connection type specifier as identity.
For example, a delegation intended to be used by a webserver would be derived using `http` as identity.
A web browser would then use the same specifier `http` to verify the signature for HTTP connections.
Adding additional levels also allows CDNs to delegate keys on their own, i.e., CDN would receive a delegated key for a specific domain that can then be further derived for specific content types, specific servers, data centers, etc.

\paragraph{Other applications}
Additionally, we note that the validity dates in delegation schemes such as DeC or from \cite{DBLP:conf/openidentity/WagnerOM17} can be interpreted as more fine-grained epochs. Hence, if the end date of the validity period is set to the end of an epoch, DeC and other forms of delegation schemes can be interpreted as \TBIDS\ without $\fsEUFCMA$ security. Conversely, \TBIDS\ may find applications beyond the scope of TLS, e.g., for delegation in trust management.

\section{Evaluation}
\label{sec:Eval}

In this section, we evaluate an instantiation of the \TBIDS\ scheme introduced in \Cref{sec:crypto} based on \HIBS\ from BBG and our application into TLS described in \Cref{sec:integration}.
As expected, the cost of pairing evaluations dominates the performance.
Nevertheless, reasonable fast pairing implementations are available, and the performance penalty for additional features is below reasonable round trip times.

\subsection{Performance Measurements}
\label{sec:performance}

To understand the performance implications of \TBIDS, we implemented it in C based on relic~\cite{relic-toolkit} for runtime figures.
Furthermore, we implemented the \TBIDS\ scheme as Java library based on \ECCelerate~\cite{ECCelerate} and integrated it in the \ISASILK~\cite{ISASILK} stack to verify our claims concerning easy integration into TLS.\footnote{Implementations are available at \url{https://github.com/IAIK/TBIBS/}.}
For the C implementation, we target 128 bit security.
Hence we choose a pairing-friendly Barreto-Lynn-Scott~\cite{DBLP:conf/scn/BarretoLS02} curve with 381 bits (BLS-381) for the C implementation.
Following the recent security level estimations~\cite{DBLP:conf/mycrypt/MenezesS016,DBLP:journals/joc/BarbulescuD19}, this curve provides roughly 128 bit security.
For the Java implementation, we chose a variant of Barreto-Naehrig~\cite{DBLP:conf/sacrypt/BarretoN05} curve as BLS curves are not implemented in \ECCelerate.
As BN curves, we choose both curves with 256 and 461 bits. The latter provides 128 bits security, whereas the former provides around 100 bits of security, which we deem enough for key material that is recycled every three months as in the case of Let's Encrypt~\cite{DBLP:conf/ccs/AasBCDEFHHKRSW19}.

\paragraph{\TBIDS\ Evaluation}
Runtime-wise, $\TBIDS$ signing requires two exponentiations in $\G_1$ and one in $\G_2$, whereas verification requires three pairings.
In comparison, the standard signature schemes EdDSA and ECDSA require only a small number of operations in $\Z_p$ and one exponentiation in $\G$ for signing and two exponentiations in $\G$ for verification.
We present the performance figures for \TBIDS\ in comparison to OpenSSL's EdDSA and ECDSA implementation in \Cref{tab:algo}.
For OpenSSL, the numbers have been obtained by running \texttt{openssl speed}, whereas the numbers for \TBIDS\ were obtained by averaging over 10000 runs.
We performed all experiments on a Thinkpad T450s with an Intel Core i7-5600U CPU.
While the numbers suggest that \TBIDS\ is significantly slower, we note that the ECDSA implementation in OpenSSL is heavily optimized for the specific architecture.
If assembler optimizations are disabled, verification of ECDSA is close to \TBIDS\ signing, as expected from the number of group operations.

\begin{table}[t]
  \centering
  \caption{Runtime benchmarks in ms of ECDSA, EdDSA, and $\TBIDS$ from \Cref{sm:tbids-ibs}. The numbers for Ed25519 and ECDSA are from OpenSSL 1.1.1 with enabled optimizations.}
  \label{tab:algo}
  \begin{tabular}{lrr}
  \toprule
  Algorithm & $\Sign$ & $\Verify$ \\
  \midrule
  EdDSA (Ed25519)    & 0.05 & 0.14 \\
  ECDSA (sepc256r1) %
                    & 0.02 & 0.08 \\
  \TBIDS\ (BLS-381)  & 0.58 & 1.94 \\
\bottomrule
\end{tabular}
\end{table}

Regarding sizes of signatures and public keys, both EdDSA and ECDSA have signatures with two scalars in $\Z_p$, and the public key contains a group element.
For \TBIDS, public keys contain an element in $\G_2$ and signatures contain one element from $\G_1$ and $\G_2$ each.
Therefore, we can approximate its size as 64 bytes, 96 bytes, and 116 bytes if instantiated with BN-256, BLS-381, and BN-461, respectively, if point compression is enabled and as 128 bytes, 191 bytes, and 230 bytes, respectively, if not enabled.
An EdDSA or ECDSA public key only uses 32 bytes with point compression enabled and 64 bytes without.
Furthermore, the signature in the latter only sums up 64 bytes (two scalars in $\Z_p$), while the \TBIDS's signature needs one element of both groups $\G_1$ and $\G_2$, which results in 96, 143, and 173 bytes, respectively, with point compression and 192 bytes, 286 bytes, 345 bytes, respectively without point compression.

\paragraph{TLS Evaluation}
To measure the performance of \TBIDS\ in the use case of TLS, we implemented a \Java\ Signature library based on \JCA\ guidelines.
The \IAIK\ \ECCelerate\ library served as a backend for cryptographic elliptic curve operations.
Furthermore, the \ISASILK\ library, supporting the Java Cryptography Architecture, was adapted to the \TBIDS\ signature library.
Then, a demo client and server were implemented with this TLS library.
The measurements for the other algorithms were also performed using the same \Java\ libraries.

For benchmarking the \TBIDS\ algorithm and TLS adaption, we used OpenJDK’s Java Microbenchmark Harness (JMH) library.
Furthermore, we used Linux’s traffic control queueing disciplines for simulating network latency on our TLS benchmarks.
The latency we introduced amount to 10ms, with a variance of 1ms normal distributed.
That results in a round trip time of 20ms, a typical delay for a home’s network with Wireless and glass fiber connecting to Google (c.f.~\cite{verizon}).
Since all benchmarks have been performed on a Thinkpad T450s with an Intel Core i7-5600U, the absolute numbers may not be comparable with performance on a server CPU.

\Cref{tab:tls} presents benchmarks showing the cost of a TLS handshake using ECDSA, EdDSA, and \TBIDS.
We can see that cryptographic performance plays a much smaller factor in the overall performance, as the slowdown is smaller than a factor of 2. For instance, with BN-256, we only experience a factor of 1.2 as a slowdown.
Further optimization, such as precomputation (discussed in \Cref{sec:practical-precomputation}) and assembler implementation, may reduce the performance gap even further. Bandwidth-wise, \Cref{tab:bytes} shows that, in principle, \TBIDS\ needs twice the amount of bytes than the other schemes.
On the other hand, the size does not exceed any supported TLS size limits.

\begin{table}[t]
  \centering
  \caption{Benchmarks of TLS 1.3 handshakes using RSA, ECDSA, EdDSA, and \TBIDS\ (one operation is a full handshake) and sizes of the packets sent by the client (C) and the server (S).}
  \label{tab:tls} \label{tab:bytes}
  \begin{tabular}{lrrrrr}
    \toprule
    Algorithm         & ops/s        & s/ops  & bytes (C) & bytes (S) \\
    \midrule
    RSA (2048 bit)    &        8.942 &  0.112 & 643 & 1773 \\
    ECDSA (secp256r1) &        9.254 &  0.108 & 643 & 1385 \\
    EdDSA (ed25519)   &        9.198 &  0.109 & 643 & 1330 \\
    \TBIDS\ (BN-256)  &        7.703 &  0.130 & 645 & 2769 \\
    \TBIDS\ (BN-461)  &        4.598 &  0.217 & 645 & 3915 \\
    \bottomrule
  \end{tabular}
\end{table}

\paragraph{Discussion}
As depicted in \Cref{tab:tls}, applying \TBIDS\ to TLS slows the handshake by a factor of 1.2 with the smaller BN curves and a factor of 2 with the larger curves.
That applies as long as we assume a reasonable, inter-continental latency.
If the latency is higher than 10ms, the performance difference decreases further. Furthermore, we note that highly optimized assembler code, as it can be found in OpenSSL for ECDSA, would further reduce the performance gap.
As the relic-based implementation shows, verification performance can be far beneath typical network latencies.
Also, smaller and embedded devices may benefit from hardware acceleration~\cite{DBLP:conf/ches/UnterluggauerW14}.
Besides, authentication using certificates is only required once.
Every further handshake to the CDN may use session resumption, thus omitting any $\TBIDS$ related operations.
Since a client may connect more than once to the CDN until all data is loaded, the difference becomes imperceptible.

Comparing performance to Keyless SSL, our approach has the advantage that means for authentication are pushed in prior and do not have to be pulled just in time for every handshake.
Further, while the latencies between client and CDN might be subtle, they can be pretty high on CDN to key server side, as they might, e.g., be geographically far apart~\cite{verizon}.

\subsection{Qualitative Evaluation}
\label{sec:qual_eval}

The most prominent advantage of our approach is that delegated keys are bound to a specific epoch.
That means if a delegated key gets compromised, it only affects the security of that epoch. In subsequent epochs, the unforgeability guarantees are restored, and thus proper authentication can be performed without entirely revoking and recycling the origin server's key material.

Besides, the origin server's private key can be equipped with forward-security features; thereby, the approach also offers mitigation against the master key's compromise. If the origin server immediately updates its private key to the next epoch after the delegations for the current epoch, compromise of that key does not impact the authenticity of connections established in the current epoch.

Another exciting aspect is the lack of modifications needed to run the \TBIDS\ scheme in TLS.
The reason for it is that our short-lived delegation infrastructure exclusively touches the cryptographic part of the TLS stack.
Because of that, PKI's mechanisms, such as revocation, still work without any changes as our proposal does not touch them.
Consequently, there is no need for a new multi-party security assumption to be proven since no changes to the TLS protocol per se are introduced.
Therefore, the (S)ACCE proof from Jager et al.~\cite{DBLP:conf/crypto/JagerKSS12} on TLS persists.

For the widespread adoption of a new signature scheme like $\TBIDS$, we would need approval by the CA/Browser Forum. Such would demonstrate a wide acceptance by CAs, browser developers, and other internet stakeholders. DeC, on the other hand, only needs an additional TLS extension, which is advantageous when deploying.

Besides, our approach can offer an advantage for steep delegation hierarchies. Assuming we want to delegate by several levels $h$:
Most certificate-based solutions would verify a chain of certificates resulting in $h$ verifications.
However, our approach would transparently support the additional levels while only consuming an additional group operation per delegation level $h$.

\paragraph{Systematic Analysis}
Finally, we characterize our approach by applying the analysis framework published recently by Chuat et al.~\cite{chuat2019sok}. In \Cref{tab:comp}, we recall their characterization of the relevant approaches but omit the (damage-free) CA revocation criteria, which are not relevant for delegation approaches and the delegation criterion as it is inherently satisfied by all approaches. We extend their characterization with our \TBIDS-based approach (and the STAR delegation approach).
\newcommand{\rot}[1]{\rotatebox[origin=c]{90}{#1}}
\begin{table*}[t]
\caption{Evaluation of schemes with respect to 16 criteria of Chuat et al.'s 19 criteria framework~\cite{chuat2019sok}. \priority{100} offers the benefit; \priority{50} partially offers benefit; \priority{0} does not offer the benefit.}
  \label{tab:comp}
\begin{tabular}{l|c|cc|ccc|ccc|cccccc|r}
\toprule
Scheme                                   &
\rot{Avoids Full Delegation}             &
\rot{Supports leaf revocation}           &
\rot{Supports autonomous leaf rev.}      &
\rot{Supports domain-based policies}     &
\rot{No trust-on-first-use required}     &
\rot{Preserves user privacy}             &
\rot{No increased page-load delay}       &
\rot{Low burden on CAs}                  &
\rot{Reasonable logging overhead}        &
\rot{Non-proprietary}                    &
\rot{No special hardware required}       &
\rot{No extra CA involvement}            &
\rot{No browser-vendor involvement}      &
\rot{Server compatible}                  &
\rot{Browser compatible}                 &
\rot{No out-of-band communication}       \\
\midrule
Private key sharing &
\priority{0}   & \priority{0}   & \priority{0}   & \priority{0}   & \priority{100} & \priority{100} & \priority{100} & \priority{100} & \priority{100} & \priority{100} & \priority{100} & \priority{100} & \priority{100} & \priority{100} & \priority{100} & \priority{100}\\
Cruiseliner certificates~\cite{DBLP:conf/ccs/CangialosiCCLMM16,DBLP:conf/sp/LiangJDLWW14} &
\priority{0}   & \priority{0}   & \priority{0}   & \priority{0}   & \priority{100} & \priority{100} & \priority{100} & \priority{100} & \priority{100} & \priority{100} & \priority{100} & \priority{0}   & \priority{100} & \priority{100} & \priority{100} & \priority{100}\\
Name const. certificates~\cite{DBLP:conf/sp/LiangJDLWW14} &
\priority{100} & \priority{0}   & \priority{0}   & \priority{100} & \priority{100} & \priority{100} & \priority{100} & \priority{100} & \priority{0}   & \priority{100} & \priority{100} & \priority{0}   & \priority{100} & \priority{100} & \priority{50}  & \priority{100}\\
DANE-based~\cite{DBLP:conf/sp/LiangJDLWW14} &
\priority{100} & \priority{0}   & \priority{0}   & \priority{0}   & \priority{100} & \priority{100} & \priority{0}   & \priority{100} & \priority{100} & \priority{100} & \priority{100} & \priority{100} & \priority{100} & \priority{100} & \priority{0}   & \priority{100}\\
SSL splitting~\cite{DBLP:journals/cn/Lesniewski-LaasK05} &
\priority{100} & \priority{0}   & \priority{0}   & \priority{0}   & \priority{100} & \priority{100} & \priority{0}   & \priority{100} & \priority{100} & \priority{100} & \priority{100} & \priority{100} & \priority{100} & \priority{0}   & \priority{100} & \priority{100}\\
Keyless SSL~\cite{cloudflare-keyless-ssl-details} &
\priority{100} & \priority{0}   & \priority{0}   & \priority{0}   & \priority{100} & \priority{100} & \priority{0}   & \priority{100} & \priority{100} & \priority{100} & \priority{100} & \priority{100} & \priority{100} & \priority{0}   & \priority{100} & \priority{100}\\
STAR Delegation~\cite{starDelegate} &
\priority{100} & \priority{100} & \priority{0}   & \priority{100} & \priority{100} & \priority{100} & \priority{100} & \priority{0}   & \priority{0}   & \priority{100} & \priority{100} & \priority{0}   & \priority{100} & \priority{0}   & \priority{100} & \priority{100}\\
Proxy certificates~\cite{DBLP:journals/corr/abs-1906-10775} &
\priority{100} & \priority{50}  & \priority{50}  & \priority{100} & \priority{100} & \priority{100} & \priority{100} & \priority{100} & \priority{100} & \priority{100} & \priority{100} & \priority{100} & \priority{100} & \priority{50}  & \priority{0}   & \priority{100}\\
DeC~\cite{DelegatedCredentials} &
\priority{100} & \priority{100} & \priority{100} & \priority{50}  & \priority{100} & \priority{100} & \priority{100} & \priority{100} & \priority{100} & \priority{100} & \priority{100} & \priority{50}  & \priority{100} & \priority{0}   & \priority{0}   & \priority{100}\\
\TBIDS      &
\priority{100} & \priority{100} & \priority{100} & \priority{50}  & \priority{100} & \priority{100} & \priority{100} & \priority{100} & \priority{100} & \priority{100} & \priority{100} & \priority{50}  & \priority{100} & \priority{50}  & \priority{50}  & \priority{100}\\
\bottomrule
\end{tabular}
\end{table*}
We also discuss the fulfillment of the criteria below.
\begin{description}
\item[Avoids Full Delegation] This criterion is fulfilled for any approach which keeps the origin server in sole control of the private key.

\item[Supports leaf revocation] DeC and \TBIDS\ technically do not satisfy this benefit directly since revoking a delegation at the chain of trust's end is impossible. However, thanks to their short-lived characteristic, domain owners are able to invalidate a delegation on key compromise or similar eventualities swiftly. Chuat et al. characterize proxy certificate based approaches as partially fulfilling this criterion since they can also be short-lived.

\item[Supports autonomous revocation] With both, DeC and \TBIDS\, it is possible to perform revocation autonomously, i.e., independent from a CA, browser vendor, or log. For both, the domain owner can decide independently to stop the short-lived delegation by ending the distribution of credentials, respectively delegated keys. As before, proxy certificates partially fulfill this criterion if they are used in a short-lived manner.

\item[Support domain-based policies] While name constraints and proxy certificates can specify such policies in the certificate, DeC and \TBIDS\ partially fulfill this criterion since their semantics are limited (only a time component is supported in its standard versions).

\item[No trust-on-first-use required] None of the approaches requires trust-on-first-use since the PKI remains unchanged.

\item[Preserves user privacy] None of the delegation approaches leaks domain-related data to a third party. Hence all of them fulfill this criterion.

\item[No increased page-load delay] Chuat et al. consider this criterion fulfilled if none or small processing delays incur. SSL splitting, KeylessSSL, and the DANE-based approach all require additional round-trips and therefore do not fulfill it. The \TBIDS-based approach only incurs a small overhead, as demonstrated in \Cref{tab:algo}.

\item[Low burden on CAs] Approaches fulfill this requirement if no or low operational overhead incurs for Certificate Authorities. Only STAR delegations do not satisfy this criterion since the CA has to support the ACME protocol and trust the CDN.

\item[Reasonable logging overhead] The approaches putting heavy pressure on certificate transparency (CT) logs~\cite{DBLP:journals/queue/Laurie14} are name constraint certificates and STAR delegations. For name constraint certificates, every domain owner may issue an arbitrary number of certificates that are recorded on the CT logs. STAR delegations produce a growing number of short-lived certificates for each delegation. Neither DeC nor \TBIDS\ require delegations to be logged.

\item[Non-proprietary] All of the approaches are open and neither restricted nor controlled by a third party.

\item[No special hardware is required] Although cryptographic operations may always benefit from specialized hardware, it is not necessary for deployment. Hence, none of the approaches necessarily need specialized hardware.

\item[No extra CA involvement] Chuat et al. characterize DeC as partially satisfying this criterion since the CA needs to include an extension in the end-entity certificate. For \TBIDS, the CA needs to accept \TBIDS\ public keys, hence we also consider our approach as partially fulfilling this criterion. However, name constraint, Cruiseliner, and STAR delegation certificates do not satisfy this criterion as a significant part of the delegation process is managed by the CA.

\item[No browser-vendor involvement] None of the approaches needs active browser-vendor participation.

\item[Server compatibility] We consider \TBIDS\ as partially satisfying this criterion since adding support for it on the server side is mostly a matter of adding the signature scheme implementation to the TLS stack. Other approaches including DeC and KeylessSSL need more involved changes in the TLS stack, and hence does not satisfy this constraint. The same holds for true for STAR delegations since it requires the origin server to approve and pass CSR to the ACME CA while authenticating as the domain owner.

\item[Browser compatibility] Again, we give partial points to \TBIDS\ since only an implementation of the signature scheme needs to be added to the TLS stack to ensure support for the approach.

\item[No out-of-band communication] None of the approaches uses a separate channel or communicates with a third party server that would not otherwise be contacted by the client.

\item[Damage-free CA-certificate revocation] Our approach does not directly satisfy this criterion. However, since we do not touch the PKI, a combination with measures such as PKISN~\cite{DBLP:conf/eurosp/SzalachowskiCP16} yields a generic solution that is applicable to any approach that does not satisfy this criterion itself.
\end{description}

\subsection{Conclusion}
In this paper we showed that signature schemes with short-lived delegation provide an interesting alternative to current delegation practices in the TLS ecosystem.
In specific, we presented a forward-secure alternative to the approach used by Delegated Credentials.
We used \TBIDS\ to derive short lived keys that can sign directly on behalf of the master key.
Preservation of standard PKI infrastructure and forward-security are the main features of the approach and come with only small performance overheads, which we demonstrated to be practical.

\begin{acks}
This work was supported by the
  \grantsponsor{EUH2020}{European Union's Horizon 2020 research and innovation programme}{https://ec.europa.eu/programmes/horizon2020/en} under grant agreements \numero~\grantnum{EUH2020}{871473} (KRAKEN) and \numero~\grantnum{EUH2020}{783119} (SECREDAS).
\end{acks}

\bibliographystyle{ACM-Reference-Format}
\bibliography{bib,dblp}

\ifdefined\full
\appendix

\section{Signature Schemes}
\label{app:sigs}

For completeness, we recall the standard definition of a signature scheme.
\begin{definition}
  A signature scheme $\Sigma$ consists of the PPT algorithms $(\Gen, \Sign,\Verify)$, which are defined as follows:
\begin{description}
  \item[$\Gen(1^\secpar)\colon$] On input security parameter $\secpar$ outputs a signing key $\sk$ and a verification key $\pk$ with associated message space $\Msg$.
  \item[$\Sign(\sk, \msg)\colon$] On input, a secret key $\sk$ and a message $\msg \in \Msg$, outputs a signature $\sigma$.
  \item[$\Verify(\pk,\msg,\sigma)\colon$] On input a public key $\pk$, a message $\msg \in \Msg$ and a signature $\sigma$, outputs a bit $b$.
\end{description}
\end{definition}
We note that for a signature scheme many independently generated public keys may be with respect to the same parameters $\pp$, e.g., some elliptic curve group parameters. In such a case we use an additional algorithm $\Setup$, and $\pp \gets \Setup(1^\secpar)$ is then given to $\Gen$. We assume that a signature scheme satisfies the usual (perfect) correctness notion, i.e. for all security parameters $\secpar \in \N$, for all $(\pk, \sk) \gets \Gen(1^\secpar)$, for all $m \in \Msg$, we have that
\[\Pr\mleft[\Verify(\pk, \msg, \Sign(\sk, \msg)) = 1\mright] = 1\text{.}\]
Below, we recall the standard existential unforgeability under adaptively chosen message attacks ($\EUFCMA$ security) notion, which requires that even with access to a signing oracle an adversary is unable to forge signature on message that have not been queried.
\begin{experiment}[h]
  \begin{flushleft}
  The experiment has access to the following oracles:
  \begin{description}
    \item[$\Sign'(\sk,\msg)\colon$] This oracle computes $\sigma \gets \Sign(\sk, \msg)$, adds $\msg$ to $\mathcal{Q}$, and returns $\sigma$.
  \end{description}
\end{flushleft}
  \textbf{Experiment} $\ExpSigEUFCMA{\Sigma,\advA}(\secpar)$
  \begin{algorithmic}
    \State $(\pk,\sk) \gets \Gen(1^\secpar)$
    \State $(\msg^*, \sigma^*) \gets \advA^{\Sign'(\sk, \cdot)}(\pk)$
    \State if $\Verify(\pk, \msg^*, \sigma^*) = 0$, return 0
    \State if $\msg^* \in \mathcal{Q}$, return 0
    \State return 1
  \end{algorithmic}
  \caption{The $\EUFCMA$ experiment for a signature scheme $\Sigma$.}
  \label{exp:sig-eufcma}
\end{experiment}
\begin{definition}[$\EUFCMA$]
  For any PPT adversary $\advA$, we define the advantage in the $\EUFCMA$ experiment $\ExpSigEUFCMA{\Sigma,\advA}$ (cf. \Cref{exp:sig-eufcma}) as
  \begin{align*}
    \AdvSigEUFCMA{\Sigma,\advA}(\secpar):=\Pr\left[\ExpSigEUFCMA{\Sigma,A}(\secpar) = 1\right] \text{.}
  \end{align*}
  A signature scheme $\Sigma$ is $\EUFCMA$-secure, if $\AdvSigEUFCMA{\Sigma,\advA}(\secpar)$ is a negligible function in $\secpar$ for all PPT adversaries $\advA$.
\end{definition}
\fi

\end{document}